\renewcommand{\O}{\mathcal{O}}
\DeclareMathOperator{\Po}{Po}
\DeclareMathOperator{\Bin}{Bin}
\let\P=\PP
\def\P{\mathbf{P}}
\def\hP{\mathbf{\hat{P}}}
\def\peel{\mathsf{peel}}
\def\root{\mathrm{root}}
\def\q#1{q^{(#1)}}
\def\co{\mathrm{co}}
\def\hut{\expandafter\hat}
\def\e{\mathrm{e}}
\def\U{\mathcal{U}}
\def\conv{\stackrel{r→∞}\longrightarrow}
\def\lipicsdescriptionlabel#1{{\setlength\labelsep{0pt}\descriptionlabel{#1}}}
\let\ldl=\lipicsdescriptionlabel
\def\itRef#1{\ldl{(\ref{#1})}}
\crefname{@theorem}{Theorem}{Theorems}
\crefname{Definition}{Definition}{Definitions}
\crefname{fact}{Fact}{Facts}
\begin{document}

\title{Peeling Close to the Orientability Threshold –\\ Spatial Coupling in Hashing-Based Data Structures}
\author{Stefan Walzer\thanks{Technische Universität Ilmenau}}

\date{}

\maketitle

% Copyright Statement
% When submitting your final paper to a SIAM proceedings, it is requested that you include 
% the appropriate copyright in the footer of the paper.  The copyright added should be 
% consistent with the copyright selected on the copyright form submitted with the paper.
% Please note that "20XX" should be changed to the year of the meeting.

% Default Copyright Statement
% \fancyfoot[R]{\scriptsize{Copyright \textcopyright\ 2021 by SIAM\\
% Unauthorized reproduction of this article is prohibited}}

% Depending on which copyright you agree to when you sign the copyright form, the copyright 
% can be changed to one of the following after commenting out the default copyright statement
% above.

%\fancyfoot[R]{\scriptsize{Copyright \textcopyright\ 20XX\\
%Copyright for this paper is retained by authors}}

%\fancyfoot[R]{\scriptsize{Copyright \textcopyright\ 20XX\\
%Copyright retained by principal author's organization}}

%\pagenumbering{arabic}
%\setcounter{page}{1}%Leave this line commented out.
{
\small
%!TEX root=./ms.tex

\begin{abstract}
    In multiple-choice data structures each element $x$ in a set $S$ of $m$ \emph{keys} is associated with a random set $e(x) ⊆ [n]$ of \emph{buckets} with capacity $ℓ ≥ 1$ by hash functions. This setting is captured by the hypergraph $H = ([n],\{e(x) \mid x ∈ S\})$. Accomodating each key in an associated bucket amounts to finding an \emph{$ℓ$-orientation} of $H$ assigning to each hyperedge an incident vertex such that each vertex is assigned at most $ℓ$ hyperedges. If each subhypergraph of $H$ has minimum degree at most $ℓ$, then an $ℓ$-orientation can be found greedily and $H$ is called \emph{$ℓ$-peelable}. Peelability has a central role in invertible Bloom lookup tables and can speed up the construction of retrieval data structures, perfect hash functions and cuckoo hash tables.

    Many hypergraphs exhibit sharp density thresholds with respect to $ℓ$-orientability and $ℓ$-peelability, i.e.\ as the density $c = \frac{m}{n}$ grows past a critical value, the probability of these properties drops from almost $1$ to almost $0$. In fully random $k$-uniform hypergraphs the thresholds $c_{k,ℓ}^*$ for $ℓ$-orientability significantly exceed the thresholds for $ℓ$-peelability. In this paper, for every $k ≥ 2$ and $ℓ ≥ 1$ with $(k,ℓ) ≠ (2,1)$ and every $z > 0$, we construct a new family of random $k$-uniform hypergraphs with i.i.d.\ random hyperedges such that \emph{both} the $ℓ$-peelability and the $ℓ$-orientability thresholds approach $c_{k,ℓ}^*$ as $z → ∞$. In particular we achieve $1$-peelability at densities arbitrarily close to $1$, extending the reach of greedy algorithms.
    
    Our construction is simple: The $n$ vertices are linearly ordered and each hyperedge selects its $k$ elements uniformly at random from a random range of $\frac{n}{z+1}$ consecutive vertices.
    
    We thus exploit the phenomenon of \emph{threshold saturation} via \emph{spatial coupling} discovered in the context of low-density parity-check codes. Once the connection to data structures is in plain sight, a framework by Kudekar, Richardson and Urbanke \cite{KRU:Wave-Like:2015} does the heavy lifting in our proof.
    
    We demonstrate the usefulness of our construction using our hypergraphs as a drop-in replacement in a retrieval data structure by Botelho et al.\ \cite{BPZ:Practical:2013}. This reduces memory usage from $≈1.23m$ bits to $≈1.12m$ bits (for input size $m$). Using $k > 3$ attains, at small sacrifices in running time, further improvements to memory usage.
\end{abstract}

}
           
\nocite{DR:Towards:2012}
%!TEX root=./ms.tex

\def\cklo{c_{k,ℓ}^*}
\def\co#1#2{c_{#1,#2}^*}
\def\cklp{c_{k,ℓ}^{\scalebox{0.6}{$\triangle$}}}
\def\cpnoindex{c^{\scalebox{0.6}{$\triangle$}}}
\def\cp#1#2{c_{#1,#2}^{\scalebox{0.6}{$\triangle$}}}
\def\FRfamily{(H_{n,cn}^k)_{c ∈ ℝ₀^+,n ∈ ℕ}}
\def\fklzp{f_{k,ℓ,z}^{\scalebox{0.6}{$\triangle$}}}
\def\fklp{f_{k,ℓ}^{\scalebox{0.6}{$\triangle$}}}
\def\fklzo{f_{k,ℓ,z}^*}
\def\fklo{f_{k,ℓ}^*}
\def\hF{\hat{F}}
\def\convn{\stackrel{n→∞}{\longrightarrow}}
\def\refRel#1#2{\stackrel{\textrm{\lipicsdescriptionlabel{(\foreach \r[count=\i] in {#1}{\ifnum\i>1,\fi\ref{\r}})}}}{#2}}
\def\headMath#1#2{\texorpdfstring{$#1$}{#2}}
\def\construct{\textsf{construct}\xspace}
\def\eval{\textsf{eval}\xspace}
\def\member{\textsf{member}\xspace}

%%%%%%% NEW INTRODUCTION ATTEMPT %%%%%%%%%%%%%%%
\section{Introduction}
\label{sec:intro}

Various data structures relying on the “power of multiple choices” \cite{M:The_Power:1991} feature a set of $n$ \emph{buckets}, indexed by $[n] = \{1,…,n\}$, each of capacity $ℓ ≥ 1$, and a set of $m$ \emph{keys}, each associated with $k ≥ 2$ random buckets via hash functions. Take \emph{cuckoo hashing} for instance. The task is to place each key into one of the $k$ buckets associated with it such that no bucket is overloaded. This can be modelled by a hypergraph $H = (V,E)$ with vertex set $V = [n]$ representing buckets and $m$ hyperedges of size $k$ representing keys. A placement of keys corresponds to an \emph{$ℓ$-orientation} of $H$ \cite{FKP:The_Multiple:2011}, which is a function $o \colon E → V$ with $o(e) ∈ e$ for all $e ∈ E$ and $|o^{-1}(v)| ≤ ℓ$ for all $v ∈ V$.

If each key $x$ is assigned its buckets $e(x) = \{h₁(x),h₂(x),…,h_k(x)\} ⊆ [n]$ by $k$ independent and fully random hash functions, then the corresponding \emph{fully random $k$-uniform hypergraph} is denoted by $H_{n,m}^k$ (the issue of repeated incidences $h_i(x) = h_j(x)$ or repeated hyperedges $e(x) = e(y)$ is irrelevant for our purposes).

% The probability for $H_{n,m}^k$ to be $ℓ$-orientable is well understood.
For a family $(H_{n,c})_{c∈ℝ_{≥0},n∈ℕ}$ of random hypergraphs we say that $c^*$ is a threshold for a property $P$ if $c^* = \sup\{c ∈ ℝ_{≥0} \mid \lim_{n→∞}\Pr[H_{n,c} ∈ P] = 1\}$.
 The thresholds $c_{k,ℓ}^*$ for $ℓ$-orientability of $\FRfamily$ are known\footnote{
    They are also known to be \emph{sharp} (cf. \cite{Friedgut:Sharp-Thresholds:1999}), meaning $c^* = \inf\{ c ∈ ℝ_{≥0} \mid \Pr[H_{c,n} ∈ P] \convn 0 \}$ also holds.
    % We leave this aside since our results do not hinge on this.
}
  for all $k ≥ 2$ and $ℓ ≥ 1$, see \cite{PR:Cuckoo:2004,CSW:The_Random:2007,FR:The_k-orientability:2007,DGMMPR:Tight:2010,FM:Maximum:2012,FP:Sharp:2012,FKP:The_Multiple:2011}. They determine the limit $c_{k,ℓ}^*/ℓ$ of the memory efficiency (“used space $cn$ over allocated space $ℓn$”) that can be reliably achieved by cuckoo hash tables.

Some applications, however, rely on the stronger hypergraph property of \emph{$ℓ$-peelability}, which means that all subhypergraphs\footnote{A subhypergraph of $H = (V,E)$ is a hypergraph $H' =(V',E')$ with $V' ⊆ V$ and $E' ⊆ E ∩ 2^{V'}$.} of $H$ have minimum degree at most $ℓ$. This often enables greedy, linear time construction algorithms. To place all keys in a cuckoo hash table, repeatedly look for a bucket that is associated with at most $ℓ$ unplaced keys and place those keys in that bucket.

\begin{table*}[htb]
    \small
    \centering
    \renewcommand{\tabcolsep}{0.15cm}
    \begin{tabular}{cccccccc}
        \toprule 
        $ℓ$\textbackslash\raisebox{1.5pt}{$k$} & 2 & 3 & 4 & 5 & 6 & 7\\
        \midrule
        1 & (0.5\phantom{000}, 0\phantom{.0000}) & (0.9179, 0.8185) & (0.9768, 0.7723) & (0.9924, 0.7018) & (0.9974, 0.6371) & (0.9991, 0.5818)\\
        2 & (0.8970, 0.8377) & (0.9882, 0.7764) & (0.9982, 0.6668) & (0.9997, 0.5789) & (1.0000, 0.5108) & (1.0000, 0.4573)\\
        3 & (0.9592, 0.8582) & (0.9973, 0.7248) & (0.9998, 0.6036) & (1.0000, 0.5152) & (1.0000, 0.4496) & (1.0000, 0.3992)\\
        4 & (0.9804, 0.8499) & (0.9993, 0.6867) & (1.0000, 0.5624) & (1.0000, 0.4755) & (1.0000, 0.4123) & (1.0000, 0.3644)\\
        5 & (0.9896, 0.8365) & (0.9998, 0.6579) & (1.0000, 0.5331) & (1.0000, 0.4479) & (1.0000, 0.3867) & (1.0000, 0.3406)\\
        6 & (0.9941, 0.8229) & (0.9999, 0.6353) & (1.0000, 0.5108) & (1.0000, 0.4272) & (1.0000, 0.3677) & (1.0000, 0.3231)\\
        \bottomrule
    \end{tabular}
    \caption[fragile]{(Normalised) $ℓ$-orientability and $ℓ$-peelability thresholds $(\cklo/ℓ,\cklp/ℓ)$ of fully random $k$-uniform hypergraphs, rounded to four decimal places. The former quickly approach $1$ as $k$ or $ℓ$ increases, the latter do not.}
    \vspace{1em}
    \hrule
    \label{tab:threshold-comparison}
\end{table*}

The price to pay is typically reduced memory efficiency, since, at least when fully random hypergraphs are concerned, $ℓ$-peelability thresholds $\cklp$ fall short of $ℓ$-orientability thresholds $\cklo$ as shown in \cref{tab:threshold-comparison}. For a derivation of peelability thresholds, see \cite{Molloy05:Cores-in-random-hypergraphs,C:Cores:2004,PSW:SuddenCore:96,Luczak:A-simple-solution} and \cite[Chapter 18]{MezMont:InfPhysComp:2009}.

The main contribution of this paper is to propose a new kind of distribution for hyperedges, or equivalently, a new way to set up hash functions, that result in $k$-uniform hypergraphs with an $ℓ$-peelability threshold close to the $ℓ$-orientability threshold $c_{k,ℓ}^*$ of $\FRfamily$:
\begin{theorem}
    \label{thm:condensed}
    Let $k ≥ 2$ and $ℓ ≥ 1$ with $k+ℓ≥4$. For each $n$ and $c < c_{k,ℓ}^*$ there is a distribution $D_{n,c}^{k,ℓ}$ on $k$-subsets of $[n]$ such that the random hypergraph $\hF_{n,cn}^{k,ℓ}$ with vertex set $[n]$ and $m = \lfloor cn\rfloor$ hyperedges independently sampled according to $D_{n,c}^{k,ℓ}$ is $ℓ$-peelable with probability $1-o(1)$.
\end{theorem}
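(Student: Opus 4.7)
The plan is to use \emph{spatial coupling}, following the strategy developed in coding theory for low-density parity-check codes. The distribution $D_{n,c}^{k,\ell}$ is the one previewed in the abstract: partition $[n]$ into $L$ blocks of roughly equal size, fix a window width $w$ with $L/w \approx z+1$, and sample each hyperedge by first picking a uniformly random window of $w$ consecutive blocks and then choosing $k$ of its vertices independently and uniformly at random from the union of those blocks. The argument has two conceptual steps: (i) reduce peeling to a local density-evolution recursion, and (ii) invoke the Kudekar--Richardson--Urbanke threshold saturation framework \cite{KRU:Wave-Like:2015}.

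For (i), I would track, for each window position $i \in [L]$, the limiting probability $q_i^{(t)}$ that a hyperedge anchored at window $i$ survives after $t$ rounds of peeling. Because random hypergraphs of bounded density are locally tree-like, these probabilities satisfy a spatially coupled recursion in which $q_i^{(t+1)}$ is a function of the $q_j^{(t)}$ for $j$ in a window of width $O(w)$ around $i$. The corresponding scalar, uncoupled recursion ($w = 1$) is the classical peeling recursion whose only attracting fixed point is $q = 0$ for $c < \cklp$ and which acquires a second, nontrivial stable fixed point above $\cklp$. For (ii), windows near the ends of $[1,L]$ cover fewer blocks and therefore have lower effective load, so hyperedges there are peeled immediately; the KRU framework then asserts, under standard monotonicity and smoothness hypotheses, that this wave of successful peeling propagates inward whenever $c$ is strictly below the \emph{potential threshold} (Maxwell threshold) of the uncoupled recursion. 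The crucial step is to verify that this potential threshold coincides with $\cklo$, which amounts to rephrasing the variational characterizations of $\ell$-orientability from \cite{FR:The_k-orientability:2007,FM:Maximum:2012,FP:Sharp:2012} as a Maxwell construction on the scalar map. Combining both ingredients, for any $c < \cklo$ one chooses $z$ large enough that density evolution in the coupled system converges to $0$ everywhere, and standard concentration of the peeling dynamics (Wormald's method or similar) then yields $\ell$-peelability with probability $1 - o(1)$.

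The main obstacle I expect is bridging the two characterizations of $\cklo$: the cavity/first-moment optimizations used to derive orientability thresholds and the potential/area form required by KRU. Showing that they coincide — equivalently, that the Maxwell area under the peeling density-evolution curve vanishes exactly at $c = \cklo$ — is the technical heart of the argument. Secondary issues include taming the boundary blocks so they seed peeling without themselves being a source of unpeelable structure, and passing from the large-$L$ density-evolution statement (on which KRU is typically phrased) to a finite-$n$ probability bound. The excluded case $(k,\ell) = (2,1)$ falls outside this scheme for structural reasons: $1$-peelability of a graph is equivalent to being a forest, and for any $c > 0$ a random graph with $cn$ edges contains cycles with probability bounded away from $0$ regardless of the edge distribution, so spatial coupling cannot help.
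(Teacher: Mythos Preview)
Your proposal is correct and matches the paper's approach: spatial coupling, density evolution on the local tree limit, invocation of the KRU threshold-saturation framework, and the identification of the potential (Maxwell) threshold with $\cklo$ as the technical crux. The paper differs only in minor implementation details---a continuous rather than block-based coupling dimension, tracking vertex rather than hyperedge survival probabilities, and for the endgame Azuma's inequality after constantly many peeling rounds plus a first-moment ``no tiny core'' argument rather than Wormald's method.
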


Before presenting our construction in \cref{sec:construction}, we explain the value of peelability in data structures and coding theory. In the latter field, the idea underlying our construction is known as “spatial coupling” and a suitable toolset for analysing corresponding thresholds already exists.

\Crefrange{sec:peeling-operator}{sec:upperbound} are devoted to proving our theorems, \cref{sec:experiments} presents experiments demonstrating the practical value of our approach.

\subsection{Data Structures Benefitting from Peelability}
\label{sec:HBDS}

\def\HBDS{\textsc{hbds}\xspace}
\def\LDPC{\textsc{ldpc}\xspace}

In the following, $S$ is always a set of $m = cn$ objects from some universe $\U$ and $[n]$ a set of buckets. Each $x ∈ S$ is associated with several buckets $e(x) ≔ \{h₁(x),…,h_k(x)\} ⊆ [n]$ via a constant\footnote{Some constructions allow $k = k(x)$ to depend on the key. \cite{LMSS:Efficient_Erasure:2001,R:Mixed:2013}} number $k ≥ 2$ of hash functions $h₁,…,h_k \colon \U → [n]$ (which we assume require a negligible amount of space to store). This gives rise to a hypergraph $H = ([n],\{e(x) \mid x ∈ S\})$. We review data structures benefitting from $ℓ$-peelability of $H$ (mostly for $ℓ = 1$).

\begin{description}
    •[Cuckoo Hash Table \cite{DW07:Balanced:2007,M:Some_Open:2009,PR:Cuckoo:2004}.] A cuckoo table implements a set or dictionary data structure with key set $S$. Each $x ∈ S$ (and, possibly, associated data) should be stored in exactly one bucket $o(x)$, and each bucket can hold up to $ℓ$ objects. To allow for constant-time lookups, we demand $o(x) ∈ e(x)$, which asks for an $ℓ$-orientation of $H$. If $H$ is $ℓ$-peelable, a greedy construction in linear time is possible.
    
    Otherwise, linear time constructions of $ℓ$-orientations are only known for the fully random hypergraph $H_{n,cn}^k$ with $c < \cklo$ in the following cases. For graphs (i.e.\ $k = 2$) and $ℓ ≥ 2$, linear time algorithms are described in \cite{CSW:The_Random:2007,FR:The_k-orientability:2007}. For $ℓ = 1$ and $k ≥ 3$, consider \cite{Khosla:Balls-Into-Bins:2013,Khosla:Balls-In-Bins-Extended:2019}. It is empirically plausible that random walk insertion can maintain an $ℓ$-orientation in a dynamic setting with expected constant time per update for any $k$ and $ℓ$ – a partial answer is given in \cite{FJ:Insertion-time-Cuckoo:2017}.
    •[Invertible Bloom Lookup Table (IBLT) \cite{GM:Invertable:2011}.] Among other things, IBLTs have been used to construct error correcting codes \cite{MV:Biff:2012} and solve the set reconciliation and straggler identification problem \cite{EG:StragglerIdentification:2011}. The data structure is inspired by Bloom Filters \cite{B:Space:1970} and Bloomier Filters \cite{CC:Bloomier_Filters:2008}.
    
    In IBLTs, each bucket $v ∈ [n]$ stores $\bigoplus_{x ∈ N(v)} x$, the bit-wise \textsc{xor} of (the bit representations of) the objects $N(v) ≔ \{ x ∈ S \mid v ∈ e(x)\}$ incident to $v$, as well as the degree $|N(v)|$. Note that this data structure is easy to maintain when insertions or deletions modify $S$, even through phases with $|S| \gg n$. Here, a \textsc{ListEntries} operation can be supported that recovers $S$ if $H$ is $1$-peelable and that fails otherwise.
    •[Retrieval \cite{DP:Succinct:2008,DW:Retrieval-log-extra-bits:2019,DW:One-Block-per-Row:2019,Vigna:Fast-Scalable-Construction-of-Functions:2016,P:An_Optimal:2009}.]
    An \emph{$r$-bit retrieval data structure} $D_f$ is constructed from a function $f\colon S → \{0,1\}^r$. The only operation $\eval$ must satisfy $\eval(D_f,x) = f(x)$ for all $x ∈ S$. The interesting setting is when $D_f$ may only occupy $\O(rm)$ bits. Note that naively storing $f$ as a set of pairs requires $m·(r+\log |\U|)$ bits. To save space, we exploit that the output of $\eval(D_f,y)$ for $y ∈ \U \setminus S$ may yield an arbitrary element of $\{0,1\}^r$ and that membership queries “$x ∈ S?$” need not be supported. 
    
    The idea is to find and store values $b₁,b₂,…,b_n ∈ \{0,1\}^r$ that satisfy the linear equations $\bigoplus_{i ∈ e(x)} b_i = f(x)$ for $x ∈ S$ (with $⊕$ denoting bit-wise \textsc{xor}). When found, the sequence $(b₁,…,b_n)$ is sufficient to answer \eval-queries and takes up $rn = rm/c$ bits as desired. 
    
    The existence of a solution $(b₁,…,b_n)$ is guaranteed if the incidence matrix of $H$ has rank $m$. Actually solving the linear system may take quadratic or cubic time. If $H$ is $1$-peelable, however, then the matrix is in row echelon form up to row and column exchanges and a solution can be found in linear time.\footnote{The matrix being in “triangular form” motivates our use of the symbol “$\cpnoindex$” for peeling thresholds.}
    
    % In vanilla form, retrieval data structures have few applications.
    Retrieval data structures are used in space efficient implementations of \emph{filters} and \emph{perfect hash functions} as follows.
    \begin{description}
        •[$\hookrightarrow$ XOR-Filters \cite{CC:Bloomier_Filters:2008,DP:Succinct:2008,ML:XorFilters:2019,Weaver:XORSAT-Filter}.]
        A filter for $S$ with false positive rate $ε > 0$ is a randomised data structure supporting a \member operation with $\member(x) = 1$ for $x ∈ S$ and $\Pr[\member(y) = 1] ≤ ε$ for $y ∉ S$. The best known example is the Bloom filter \cite{B:Space:1970,BM:Survey:2003,LGMRL:BloomSurvey:2019}. The following more space efficient alternative was appropriately dubbed “\textsc{xor}-filter”.
        
        We first select a random “fingerprint” function $f\colon \U → \{0,1\}^r$. Like all hash functions, we may assume that $f$ can be stored in $\O(1)$ space. However, we also store its restriction $f_S \colon S → \{0,1\}^r$ to $S$ explicitely in a retrieval data structure $D_{f_S}$ and define $\member(x) ≔ 𝟙[\eval(D_{f_S},x) = f(x)]$. For $x ∈ S$ we obtain $\member(x) = 1$ by construction. For $x ∉ S$, the independence of $D_{f_S}$ from $x$ and $f(x)$ guarantees that $\Pr[\member(x) = 1] = 2^{-r} = ε$.
        •[$\hookrightarrow$ Perfect Hashing \cite{BBOVV:Cache-Oblivious-Peeling:14,B:Near-Optimal,BPZ:Simple:2007,BPZ:Practical:2013,Vigna:Fast-Scalable-Construction-of-Functions:2016,MWHC:A_Family:1996}.] A \emph{perfect hash function} for a key set $S$ is an injective function $p \colon S → [n]$ with $n$ not much larger than $m = |S|$ such that $p$ is efficient to store and evaluate. A standard construction considers $H = H_{n,m}^4$ with $c = \frac mn < \co{4}{1} ≈ 0.977$ and a $1$-orientation $o\colon E → V$ of $H$ (which exists with high probability). Then $p(x) ≔ o(e(x))$ is injective. Since $o(e(x)) ∈ e(x) = \{h₁(x),…,h₄(x)\}$ there is $f \colon S → \{1,2,3,4\}$ such that $o(e(x)) = h_{f(x)}(x)$ for $x ∈ S$. Thus we only need to store $f$ with a (two-bit) retrieval data structure (as well as $h₁,…,h₄$) to be able to evaluate $p(x) = h_{f(x)}(x)$.
    \end{description}
\end{description}

\subsection{Connection to Coding Theory}

Our proof of \cref{thm:condensed} imports methods from coding theory. To explain the connection to this field, we briefly introduce the binary erasure channel and point out the relationship to our notions on hypergraphs. This reveals how closely the task of constructing good codes aligns with the task of constructing good hashing-based data structures.

\subsubsection{The Binary Erasure Channel and Low-Density~Parity-Check~Codes}
\label{back:BEC}

The \emph{binary erasure channel} (BEC) is a simple but important setting. We recommend \cite[Chapter 3]{RU:Modern-Coding-Theory:2008} for an excellent introduction to this subject.
When a sequence $(x₁,…,x_m) ∈ \{0,1\}^m$ is sent over the BEC, the receiver sees a sequence $(y₁,…,y_m) ∈ \{0,1,\textsf{?}\}^m$ where for each $i ∈ [m]$ independently, the $i$-th bit is \emph{erased} ($y_i = \textsf{?}$) with probability $ε ∈ [0,1]$ and unchanged ($y_i = x_i$) with probability $1-ε$. For reliable communication over such channels, redundancy is introduced. In \emph{linear codes}, several \emph{parity conditions} are each specified by a set $P ⊆ [m]$ and dictate that $\bigoplus_{i ∈ P} x_i$ is zero. The set of admissible messages (\emph{codewords}) then forms a linear subspace of $\{0,1\}^m$.

To relate this to hypergraphs, let $V$ be the set of all parity conditions and let $E^+ = \{e₁,…,e_m\}$ where $v ∈ e_i$ if $x_i$ is involved in parity condition $v$. The incidence graph of $H^+ = (V,E^+)$ is known as the \emph{Tanner graph} \cite{T:Tanner-Graph:81}. In \emph{low-density parity-check ({\upshape\LDPC}) codes} the Tanner graph is sparse.

During transmission, bits corresponding to some set $E ⊆ E^+$ are erased, and we consider $H = (V,E)$. When decoding, we seek an assignment $x_{\mathrm{dec}} \colon E → \{0,1\}$ such that for $v ∈ V$
we have $\bigoplus_{e ∈ E, e \ni v} x_{\mathrm{dec}}(e) = c_v$ where $c_v$ is the parity of the successfully transmitted bits involved in parity condition $v$. The existence of a solution is guaranteed by construction, namely $x_{\mathrm{dec}}(e_i) = x_i$ for $e_i ∈ E$. Uniqueness of the solution and thus success of the ideal \emph{maximum a posteriori probability decoder} (MAP-decoder) requires the kernel of the incidence matrix of $H$ to be trivial — a property that implies $1$-orientability.

Success of the linear time \emph{belief propagation decoder} (BP-decoder) requires $1$-peelability of $H$. This decoder repeatedly identifies a parity condition, where all but one of the involved bits are known, and then decodes the unknown bit.
% Success of the linear time \emph{belief propagation decoder} (BP-decoder) requires $1$-peelability of $H$. This decoder iteratively identifies a parity condition, where all but one of the involved bits are known, and then decodes the unknown bit.

\subsubsection{Good Codes vs.\ Good Data Structures}

The goals in hashing-based data structures (\HBDS) are sufficiently similar to those in \LDPC decoding to render the techniques from \LDPC codes useful in \HBDS.

\begin{description}
    •[Hyperedge Size.] The (average) hyperedge size $k$ is, in \HBDS, related to (average) query time and (average) number of cache faults per query. In \LDPC codes, $k$ is the (average) number of parity conditions relating to each message bit and contributes to overall encoding and decoding time. Thus, \emph{small $k$ is good}.
    •[Density.] In \HBDS, a high edge density $c = |E|/|V|$ means accommodating many objects in little space (high load), while in \LDPC codes it means recovering many erased bits from little redundancy (high rate). Thus, \emph{large $c$ is good}.
    % In both cases, $c = 1$ is often an obvious information-theoretic upper bound.
    •[Peelability.] As discussed, peelability is useful in both worlds, enabling greedy construction of data structures and greedy decoding of messages, respectively. 
    To the author's knowledge, $ℓ$-peelability for $ℓ > 1$ plays no role in coding theory, but the tools we import can handle this more general case nonetheless.
\end{description}

\noindent An aspect with imperfect alignment is \ldl{the role of randomness}.
Data structures must handle \emph{arbitrary} key sets and the role of hash functions is, in part, to make them behave like \emph{random} keys. Unless the hash function is itself a complex data structure, it is unreasonable to assume that this produces any \emph{useful} correlation. Therefore, the hypergraphs we deal with have independent random hyperedges (at best).

An \LDPC code on the other hand is given by a fixed hypergraph $H^+$. Randomness is frequently involved in its construction, but we are in principle free to design $H^+$, for instance, we might give all vertices the same degree. Since $H$ arises from a random $ε$-fraction of the hyperedges of $H^+$, this gives us control (proportional to $ε$) on $H$ as well. In this sense the hypergraphs relevant for \HBDS are a special case of those relevant for \LDPC codes.

\subsection{New Results}
\label{sec:construction}

In what we call “spatially coupled hypergraphs”, the vertices are linearly ordered and each hyperedge selects its $k$ elements uniformly at random from a random range of consecutive vertices.\footnote{In the precursor to this work \cite{DW:DensePeelable:2019} we partitioned the set of vertices into $z+k-1$ linearly ordered segments (for $z ∈ ℕ$). Each hyperedge selects a random range of $k$ consecutive segments and one random vertex from each of these segment. While this old construction has similar \emph{empirical} properties as the updated proposal, the discreteness in the construction leads to (slightly) inferior thresholds and does not seem to admit an elegant analysis.} 
Concretely:

\begin{Definition}[Spatially Coupled Hypergraph]
\label{def:hypergraphs}
    For $k ∈ ℕ$, $z,c ∈ ℝ₀^+$ and $n ∈ ℕ$, let $F_n~=~F(n,k,c,z)$ be the random $k$-uniform hypergraph with vertex set $V = [n]₀ = \{0,…,n-1\}$ and edge set $E$ of size $m = \lfloor cn\frac{z}{z+1}\rfloor$. Each edge $e ∈ E$ is independently obtained as
    $e = \big\{\big\lfloor\tfrac{y+o_i}{z+1}n\big\rfloor \mid i ∈ [k]\big\}$ where $y ∈ [\tfrac 12, z + \tfrac 12)$ and $o₁,…,o_k ∈ [-\tfrac 12,\tfrac 12]$ are chosen uniformly at random.
    
    We call $y ∈ Y ≔ [\tfrac 12, z + \tfrac 12)$ the position of $e ∈ E$ and $\frac{v(z+1)}{n} ∈ X = [0,z+1)$ the position of $v ∈ V$.
\end{Definition}

In \cref{fig:coupled-hypergraph} we sketch aspects of the construction.

\begin{figure}[hbt]
    \includegraphics{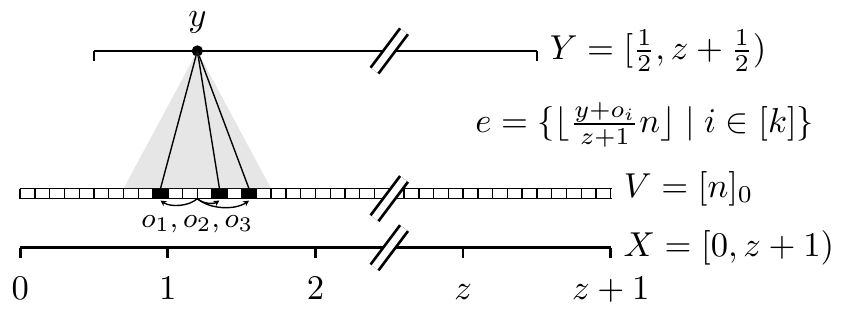}
    \caption[Construction of the spatially coupled hypergraph.]{In the construction from \cref{def:hypergraphs}, the vertex set $V = [n]₀$ is arranged linearly along the “coupling dimension” $X = [0,z+1)$. Thus each vertex has a position $x ∈ X$. Each hyperedge $e$ is independently obtained as follows. First, pick a random position $y$ uniformly from $Y = [\frac 12,z+\frac 12)$. Then pick the $k$ incidences of $e$ independently and uniformly at random from the vertices with positions in $[y-\frac 12,y+\frac 12]$.}
    \label{fig:coupled-hypergraph}
\end{figure}

It is possible (but rare) that incidences repeat within a hyperedge and that the same hyperedge appears several times in $F_n$. Note that the hyperedge density is ${|E|}/{|V|} = c\frac{z}{z+1}$ and only approaches $c$ for large $z$. The main technical contribution of this paper is the following Theorem.

\begin{theorem}
    \label{thm:main} Let $k,ℓ ∈ ℕ$, with $k ≥ 2$ and $k+ℓ ≥ 4$.  Then we have:
    \begin{enumerate}[(i)]
            • $∀c < \cklo\colon ∀z ∈ ℝ^+\colon \Pr[F_{n} \text { is $ℓ$-peelable}] \convn 1$.
            • $∀c > \cklo\colon ∃z^* ∈ ℝ^+\colon ∀z ≥ z^*\colon$\\\phantom{space}$\Pr[F_{n} \text { is $ℓ$-orientable}] \convn 0$.
    \end{enumerate}
\end{theorem}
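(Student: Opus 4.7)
The plan is to set up a density-evolution analysis of peeling on $F_n$ and import the threshold-saturation framework of Kudekar, Richardson and Urbanke \cite{KRU:Wave-Like:2015} to prove part (i); part (ii) will follow from a direct local-density obstruction.

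For part (i), I would first formulate the peeling dynamics via a local-tree analysis in the $n \to \infty$ limit. A vertex at position $x \in X$ and an edge at position $y \in Y$ see Poisson-like neighbourhoods whose parameters depend only on $x, y, c, k, z$. The probability $\q{i}(y)$ that an edge at position $y$ survives $i$ rounds of peeling satisfies a functional recursion $\q{i+1}(y) = \mathcal{F}(\q{i})(y)$, where $\mathcal{F}$ averages over vertices incident to an edge at $y$ and applies a degree-threshold function encoding $\ell$-peeling. Because no edges are placed outside $Y$, vertices near the boundaries of $X$ are effectively less loaded than vertices in the bulk; this asymmetry imposes an effective boundary condition on $\q{i}$ that should seed a peeling wave propagating inward.

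The heart of the argument is showing that this recursion converges pointwise to $0$ for every $c < \cklo$ and every $z > 0$. Here I would invoke the KRU framework, which for a broad class of monotone density-evolution operators identifies the fixed-point threshold of the spatially coupled system with the MAP-like threshold of the uncoupled system. The uncoupled system here is the standard peeling recursion on $H_{n,cn}^k$, and a key intermediate step is to show that its MAP-potential threshold coincides with $\cklo$: a nontrivial fixed point of the uncoupled recursion should correspond, via the associated potential functional, to a sub-hypergraph violating Hall's condition $|E(S)| \le \ell|S|$, i.e.\ to an obstruction to $\ell$-orientability. Once the hypotheses of KRU (monotonicity of $\mathcal{F}$, smoothness and concavity of the potential, and the identification of thresholds) are verified, they deliver $\q{i} \to 0$; a standard Azuma--Doob concentration argument then transfers this into $\P[F_n \text{ is } \ell\text{-peelable}] \convn 1$.

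For part (ii), fix $c > \cklo$ and a small $h > 0$. For $z$ sufficiently large, an interior sub-interval of positions of length $h$ contains approximately $hn/(z+1)$ vertices, and the sub-hypergraph induced by edges whose positions lie inside it is essentially a fully random $k$-uniform hypergraph with edge density arbitrarily close to $c$. By the sharp threshold for $\ell$-orientability of $H_{n,cn}^k$, this sub-hypergraph contains with high probability a subset $S$ with $|E(S)| > \ell|S|$, an obstruction inherited by $F_n$. The main obstacle I anticipate lies in part (i): verifying the KRU hypotheses in the $\ell$-orientation hypergraph setting and, in particular, constructing a potential whose stationary points match $\ell$-peeling fixed points and whose zero-energy threshold can be rigorously identified with the known $\ell$-orientability threshold $\cklo$.
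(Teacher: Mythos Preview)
Your outline for part (i) is essentially the paper's approach: local tree limit, a monotone peeling operator on position-indexed survival probabilities, and the KRU threshold-saturation theorem to drive the density evolution to zero. Two points deserve sharpening. First, the identification of the KRU potential threshold with $c_{k,\ell}^*$ is not obtained by relating fixed points to Hall-violating sets; the paper instead computes the potential $\phi(u,v)$ explicitly and shows by a direct algebraic manipulation that $\phi(u_2,v_2)=0$ is equivalent to the known characterisation $k\ell = \xi Q(\xi,\ell)/Q(\xi,\ell+1)$ of $c_{k,\ell}^*$. Second, concentration via Azuma alone does not finish the argument: density evolution plus Azuma only gives that at most $\delta n$ vertices survive $R$ rounds whp, not that the remainder is peelable. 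You still need a separate first-moment ``no small dense subhypergraph'' lemma (any induced subhypergraph on $\le \delta n$ vertices has minimum degree $\le \ell$ whp) to conclude full $\ell$-peelability; the paper supplies this explicitly.

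Your argument for part (ii) has a genuine gap. The subhypergraph induced by edges with positions in an interval of length $h$ is \emph{not} ``essentially a fully random $k$-uniform hypergraph'': for $h<1$ no edge is fully contained in the corresponding vertex interval, and for $h\ge 1$ the induced subhypergraph is itself spatially coupled (edges at different positions draw vertices from different windows), with edge density only $c(1-1/h)$ on the enclosing vertex interval. So you cannot invoke the $H_{n,cn}^k$ orientability threshold directly, and the reduction is circular. The paper takes a different route: it glues the two ends of the coupling dimension to obtain a ``borderless'' hypergraph $\widetilde{F}_n$ with exactly $\frac{zn}{z+1}$ vertices and density $c$, observes that $\widetilde{F}_n$ and $H_{n,cn}^k$ share the same random weak limit, and then applies Lelarge's theorem to transfer the known orientability gap $\delta(c)>0$ from $H_{n,cn}^k$ to $\widetilde{F}_n$. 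Unglueing adds back $\frac{n}{z+1}$ vertices, which can absorb at most $\frac{\ell n}{z+1}$ extra oriented edges; choosing $z$ large enough that $\frac{\ell}{z+1} < \delta c/2$ preserves a positive gap in $F_n$.
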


Let us distil the main takeaways from these claims, including \cref{thm:condensed}.

\def\hF{\hat{F}}
\begin{corollary}
    \label{cor:coupled-peelable}
    Let $k,ℓ ∈ ℕ$ with $k ≥ 2$ and $k+ℓ ≥ 4$. For $z ∈ ℝ^+$ consider the family $(F(n,k,c,z))_{c ∈ ℝ₀^+, n ∈ ℕ}$. Let $\fklzp$ be its threshold for \emph{$ℓ$-peelability} and $\fklzo$ its threshold for $ℓ$-orientability. Then we have:
    \begin{enumerate}[(i)]
        • $∀z ∈ ℝ^+\colon \fklzp ≥ \cklo$.
        • $\limsup_{z → ∞} \fklzo ≤ \cklo$.
        • Let $\fklp = \lim_{z→∞} \fklzp$ and $\fklo = \lim_{z→∞} \fklzo$. Then $\fklp = \fklo = \cklo$.
    \end{enumerate}
\end{corollary}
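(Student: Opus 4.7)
The plan is to deduce this corollary directly from \cref{thm:main}, together with the elementary observation that any $\ell$-peeling of a hypergraph $H$ induces an $\ell$-orientation (assign each hyperedge to the vertex through which it is peeled). This observation gives the pointwise inequality $\fklzp \leq \fklzo$ for every $z$, which, combined with the two parts of \cref{thm:main}, will squeeze both thresholds to $\cklo$.

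For (i), I would simply unfold the definition of the threshold: \cref{thm:main}(i) says that for every fixed $z$ and every $c < \cklo$ one has $\Pr[F_n \text{ is $\ell$-peelable}] \to 1$, so by definition $\fklzp \geq c$; taking the supremum over $c < \cklo$ yields $\fklzp \geq \cklo$. For (ii), I would fix $\varepsilon > 0$ and apply \cref{thm:main}(ii) at density $c = \cklo + \varepsilon$ to obtain a $z^*$ beyond which orientability fails \whp; this forces $\fklzo \leq \cklo + \varepsilon$ for all $z \geq z^*$, hence $\limsup_{z\to\infty}\fklzo \leq \cklo + \varepsilon$. Letting $\varepsilon \downarrow 0$ then finishes (ii).

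For (iii), I would chain (i), the peelability-implies-orientability inequality, and (ii) into the squeeze
\[ \cklo \leq \liminf_{z\to\infty}\fklzp \leq \limsup_{z\to\infty}\fklzp \leq \limsup_{z\to\infty}\fklzo \leq \cklo, \]
which simultaneously establishes the existence of both limits and evaluates them at $\cklo$. The same chain applied to $\fklo$ gives $\fklo = \cklo$ as well, using once more that $\fklzo \geq \fklzp$.

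There is no genuine obstacle inside the corollary itself; it is a short translation step converting the quantitative probabilistic claims of \cref{thm:main} into the threshold language used throughout the paper. The real difficulty, and the expected main obstacle, lies inside the proof of \cref{thm:main} — especially part (i), where the threshold-saturation framework of Kudekar, Richardson and Urbanke must be adapted to the peeling analysis of the spatially coupled ensemble in order to lift the peelability threshold of $F(n,k,c,z)$ all the way up to the orientability threshold $\cklo$ of the i.i.d.\ ensemble.
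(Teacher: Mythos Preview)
Your proposal is correct and follows essentially the same approach as the paper: parts (i) and (ii) are read off directly from the two parts of \cref{thm:main}, and (iii) follows from the squeeze $\fklzp \leq \fklzo$ together with (i) and (ii). The paper's own proof is even terser (three lines), but the logic is identical.
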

\begin{proof}[Proof of \cref{cor:coupled-peelable}]
    Claims \ldl{(i)} and \ldl{(ii)} are immediate consequences of the claims from \cref{thm:main}. Since $\fklzp ≤ \fklzo$ we conclude \ldl{(iii)}.
\end{proof}
\begin{proof}[Proof of \cref{thm:condensed}]
    Let $c = \cklo - ε$ for some $ε > 0$. We define $\hF_{n,cn}^{k,ℓ} ≔ F(n,k,c\frac{z+1}{z},z = 2ℓ/ε)$. Note that this hypergraph is of the desired form with $n$ vertices and $\lfloor cn\rfloor$ i.i.d.\ hyperedges of size $k$.\footnote{We remark that the “diagonal” family $(\hF_{n,cn}^{k,ℓ})_{c,n}$ is mostly of theoretical interest. We do not claim that our choice of $z = 2ℓ/ε$ is particularly practical. The fact that the hyperedge density affects the underlying parameter $z$ is also inconvenient when constructing dynamic data structures.}
    
    Observe that $c \frac{z+1}{z} = c\frac{2ℓ/ε+1}{2ℓ/ε} = c(1+ε/(2ℓ)) ≤ \cklo - ε/2 ≤ \fklzp - ε/2$ using \cref{cor:coupled-peelable} \ldl{(i)} and the trivial bound $\cklo ≤ ℓ$. Therefore $\hF_{n,cn}^{k,ℓ}$ is $ℓ$-peelable with probability $1-o(1)$ by definition of $\fklzp$.
    % Lastly, \ldl{(iv)} is obtained by defining a “diagonal family” where $z$ depends on $c$. Concretely, for $c < \cklo$ and $n ∈ ℕ$, we can use $\hF_{c,n}^{k,ℓ} ≔ F(n,k,c\tfrac{z+1}{z}, z)$ where $z = z(c)$ is large enough to fulfil $c \frac{z+1}{z} < \fklzp$. For $c ≥ \cklo$ we use any (non-peelable) random hypergraph to complete the definition.
\end{proof}

% \paragraph{Discussion.}
Our construction is in the spirit of a technique from coding theory, see \cref{sec:spatial-coupling}.
% Namely, our hypergraphs arise from the fully random hypergraphs via \emph{spatial coupling}, see~\cite{KRU:Wave-Like:2015,KRU:Coupled-Achieves-Capacity:2013}, along the one-dimensional coupling dimension $X = [0,z+1)$. Doing so, we achieve \emph{threshold saturation}. In \cref{sec:spatial-coupling} we give more background information on this remarkable phenomenon.
Note that constructions similar to ours can already be found in \cite{HMU:Space-of-Solutions-XORSAT:2013} and \cite{GMU:Maxwell-Conjecture-Via-Spatial-Coupling:2012}, however, the goals of these papers are very different. Relative to these results, we can offer:
\ldl{(1)}~A generalisation to $ℓ > 1$.
\ldl{(2)}~A more elegant construction using the updated tools from \cite{KRU:Wave-Like:2015} (continuous\footnote{In \cite{HMU:Space-of-Solutions-XORSAT:2013} the coupling dimension is discrete. In our terms, this means that the set of admissible positions of a hyperedge is $Y ∩ (\frac 1w ℤ)$ for some constant $w ∈ ℕ$. Our construction arises for $w → ∞$.} coupling dimension).
\ldl{(3)}~A framing with data structures in mind and a demonstration of practical benefits for data structures.

% HMU Do not prove anything about belief propagation thresholds in the coupled ensemble.
% They simply cite this from somewhere else.

\subsection{Comparison with Previous Constructions of Peelable Hypergraphs}
\label{sec:comparison}

\tikzstyle{resultPoint}=[circle,semithick,draw,inner sep=1]
\tikzstyle{rndOr}=[resultPoint,star,fill=yellow]
\tikzstyle{rndPl}=[resultPoint,diamond,fill=green]
\tikzstyle{lmss}=[resultPoint,rectangle,fill=red!60!black,draw=none,inner sep=1.5]
\tikzstyle{rink}=[resultPoint,fill=blue,inner sep=1]
\begin{figure*}[t]
    \def\ymin{0.6}
    \def\xmin{2.5}
    \hbox{}~\hspace{-0.5cm}~\begin{tikzpicture}[xscale=0.9,yscale=8,font=\small]
        \draw[->] (\xmin,\ymin) -- (8,\ymin) node[right] {$k$};% {(avg) hyperedge size $k$};
        \draw[->] (\xmin,\ymin) -- (\xmin,1.05) node[right] {\ $1$-peelability threshold};
        \foreach \ytik in {0.7,0.8,0.9}{
            \draw[draw=gray,dashed] (8,\ytik) -- (\xmin,\ytik) node[left] {\ytik};
        }
        \foreach \xtik in {3,4,5,6,7}{
            \draw[draw=gray,dashed,thin] (\xtik,1) -- (\xtik,\ymin) node[below] {\xtik};
        }
        \draw (8,1) -- (\xmin,1) node[left] {1};
        \foreach \c[count=\k from 3] in {0.8184691607632809,0.7722798398025512,0.7017802664857019,0.6370811272741549,0.5817751769996049}{
            \ifnum\k<7
                \node[rndPl] at (\k,\c) {};
            \fi
        }
        \foreach \D/\avgDeg[count=\j] in {1/5,2/5.25,3/5.44444,4/5.60417,5/5.74,6/5.85833,7/5.96327,8/6.05759,9/6.1433,10/6.22187,11/6.29441,12/6.36181,13/6.42476,14/6.48382,15/6.53944,16/6.59202,17/6.64188,18/6.68928,19/6.73446,20/6.77763,21/6.81895,22/6.85858,23/6.89665,24/6.93329,25/6.9686,26/7.00267,27/7.03558,28/7.06743,29/7.09826,30/7.12815,31/7.15716,32/7.18532,33/7.2127,34/7.23933,35/7.26526,36/7.29052,37/7.31514,38/7.33916,39/7.36261,40/7.38551,41/7.40788,42/7.42976,43/7.45116,44/7.47211,45/7.49261,46/7.5127,47/7.53239,48/7.55169,49/7.57062,50/7.58919,51/7.60742,52/7.62531,53/7.64289,54/7.66016,55/7.67713,56/7.69382,57/7.71022,58/7.72636,59/7.74224,60/7.75787,61/7.77325,62/7.7884,63/7.80332,64/7.81801,65/7.8325,66/7.84677,67/7.86084,68/7.87471,69/7.88839,70/7.90188,71/7.91519,72/7.92832,73/7.94128,74/7.95408,75/7.96671,76/7.97918,77/7.99149,78/8.00366,79/8.01568,80/8.02755,81/8.03928,82/8.05087,83/8.06233,84/8.07366,85/8.08486,86/8.09594,87/8.10689,88/8.11773,89/8.12844,90/8.13904,91/8.14953,92/8.15991,93/8.17018,94/8.18035,95/8.19041,96/8.20038,97/8.21024,98/8.22,99/8.22967,100/8.23925}{
                \ifnum\j>2
                    \node[lmss] at (\avgDeg,1-1/\D) {};
                \fi
        }
        \foreach \bark/\c in {3.16404/0.82151,3.30658/0.82770,3.43744/0.83520,3.55944/0.84321,3.67439/0.85138,3.78359/0.85952,3.88795/0.86752,3.98818/0.87535,4.08482/0.88298,4.17830/0.89040,4.26898/0.89761,4.35715/0.90461,4.47102/0.91089,4.59372/0.91510,4.71015/0.91772,4.82077/0.91922,4.92601/0.91992,5.02626/0.92004}{
            \node[rink] at (\bark,\c) {};
        }
        \foreach \c[count=\k from 3] in {0.917935,0.97677,0.99243,0.99738,0.99906}{
            \node[rndOr] (new-\k) at (\k,\c) {};
        }
        
        \node[right,align=left] at (current bounding box.east) {
            \begin{tabular}{cl}
                \tikz \node[rndPl]{}; & $\FRfamily$ for $k ∈ ℕ$ \\ %fully random $k$-uniform families
                \tikz \node[lmss]{}; & non-uniform families from \cite{LMSS:Efficient_Erasure:2001}  \\
                \tikz \node[rink]{}; & non-uniform families from \cite{R:Mixed:2013} \\
                \tikz \node[rndOr]{};& $(\hF_{n,cn}^{k,ℓ})_{c,n}$ from \cref{thm:condensed}\\
            \end{tabular}
        };
    \end{tikzpicture}
    \caption[Comparison of peelable hypergraph constructions.]{Trade-offs between hyperedge size and $1$-peelability threshold.\\
    Concretely, a dot at $(k,\cpnoindex) ∈ ℝ²$ indicates a family \smash{$(H_{c,n})_{c ∈ ℝ₀^+,n ∈ ℕ}$} of random hypergraphs where $H_{c,n}$ has $n$ vertices, $\lfloor cn\rfloor$ random independent hyperedges, expected hyperedge size $k$. The value $\cpnoindex$ is the $1$-peelability threshold of the family. 
    }
    \label{fig:comparison}
    \vspace{1em}
    \hrule
\end{figure*}

The applications in \cref{sec:HBDS} require hypergraph families with i.i.d.\ random hyperedges, preferably with small average hyperedge size $k$ and large $ℓ$-peelability threshold.
In \cref{fig:comparison}, we compare previous constructions to our own with respect to these criteria in the most relevant case of $ℓ = 1$.

The thresholds $\cp{k}{1}$ of the fully random families $\FRfamily$ for $k ≥ 3$ (\tikz \node[rndPl] {};) \cite{Molloy05:Cores-in-random-hypergraphs} are decreasing in $k$ and thus mostly $k = 3$ is of interest.
The thresholds of a non-uniform construction (\raisebox{1.5pt}{\tikz \node[lmss] {};}) \cite{LMSS:Efficient_Erasure:2001}, well-known in coding theory, approach $1$ for $k → ∞$. However, the maximum hyperedge size is exponential in the average hyperedge size $k$, which is problematic for some applications. Further trade-offs (\tikz \node[rink] {};) were examined by \cite{R:Mixed:2013}, for example, a hyperedge size of $3$ for $≈ 89\%$ of the hyperedges and a size of $21$ for the rest in an otherwise fully random construction yields an average hyperedge size of $≈ 5.03$ and a threshold value of $≈ 0.92$.
For $k ≥ 3$, the threshold of the family $(\hF_{n,cn}^{k,1})_{c ∈ [0,\co{k}{1}), n ∈ ℕ}$ proposed in \cref{thm:condensed} (\tikz \node[rndOr] {};) is the \emph{$1$-orientability} threshold $\co{k}{1}$ of $\smash{\FRfamily}$.

\subsection{The Technique of Spatial Coupling}
\label{sec:spatial-coupling}

The hypergraph $F_n$ arises by “spatial coupling” of $H_{n,cn}^k$ along the “coupling dimension” $X = [0,z+1)$, which roughly means that $F_n$ inherits its \emph{local} structure from $H_{n,cn}^k$ (cf. \cref{fact:fully-random-rwl,lem:limits-coincide}) but its global structure reflects $X$. Consider the parallel peeling process on $F_n$ that deletes in each of its rounds all vertices of degree at most $ℓ$ (and incident hyperedges). Vertices with a position close to the borders $0$ or $z+1$ tend to be deleted early on, while many vertices in the denser, central parts remain. But gradually, deletions at the border “expose” vertices further on the inside and the whole hypergraph “erodes” from the outside in. The effect is shown in \cref{fig:couplingPhenomenon}. This does not happen in the more symmetric construction when $X$ is glued into a circle, i.e.\ if for all $ε ∈ [0,1)$ the positions $ε$ and $z+ε$ are identified.

\begin{figure*}[t]
    \includegraphics{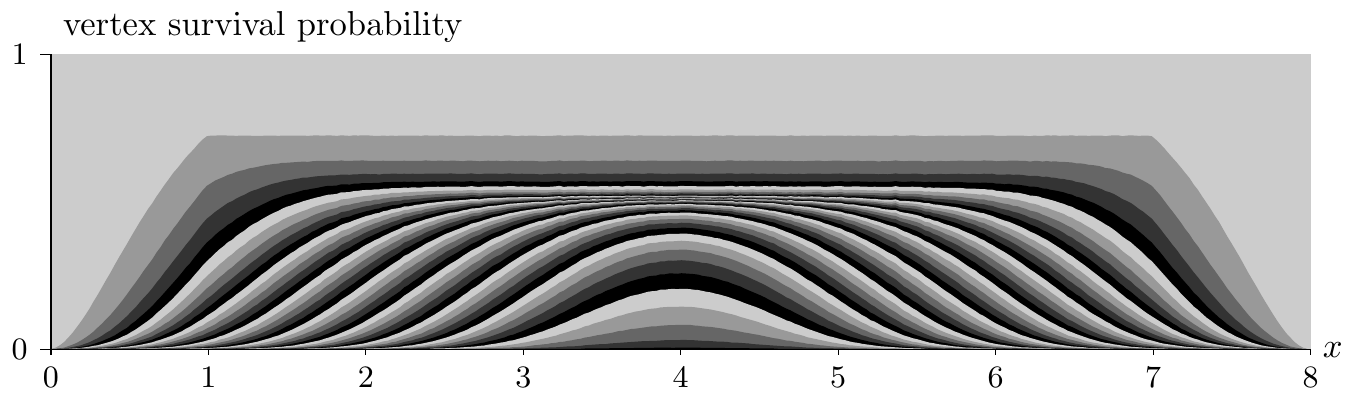}
    \caption[“Layers” of a spatially coupled hypergraph during $48$ rounds of peeling.]{The “layers” of a hypergraph sampled according to \cref{def:hypergraphs} with parameters $k =3$, $c = 0.85$, $z = 7$ and $m = 10⁸$.
    It happens to be peelable in $48$ rounds, that are assigned shades of gray in a round-robin fashion. For each round $r ∈ \{0,…,48\}$ and each position $x ∈ X = [0,8]$, the picture shows the fraction of vertices close to $x$ that “survive” $r$ rounds of the parallel peeling process, a value related to $q^{(r)}(x) ∈ [0,1]$ (see \cref{sec:peeling-operator}).
    }
    \vspace{1em}
    \hrule
    \label{fig:couplingPhenomenon}
\end{figure*}

The authors of \cite{KMSSZ:Statistical-Physics-in-Compressed-Sensing:2011,KRU:Coupled-Achieves-Capacity:2013} liken the phenomenon to water that is \emph{super-cooled} to below $0$°C in a smooth container. It will not freeze unless a \emph{nucleus} for crystallization is introduced. Once this is done, all water crystallises quickly, starting from that nucleus. In this sense, the introduction of a nucleus raises the threshold for crystallisation. 

In our construction, the borders play the role of such a nucleus and raise the peeling threshold to the orientability threshold. Similarly, in coding theory the threshold for decoding with a greedy (belief propagation) decoder is raised to the threshold for decoding with the ideal (maximum aposteriori probability) decoder. This effect is known as \emph{threshold saturation}.

% When introducing a linear geometry in the way we did, the $1$-peelability threshold of the resulting (\emph{coupled}) hypergraph family approaches the solvability thresholds of the underlying \emph{uncoupled} construction, in a wide range of cases. This phenomenon is known as \emph{threshold saturation}.

We leave a summary of the field to the experts \cite{KRU:Wave-Like:2015,KRU:Coupled-Achieves-Capacity:2013}. Put briefly, the phenomenon was discovered in the form of convolutional codes \cite{FZ:Convolutional-LDPC:99}, then rigorously explained, first in a special case \cite{KRU:Why-Conv-LDPC-Works:2010}, then more generally \cite{KRU:Coupled-Achieves-Capacity:2013}, later accounting for continuous coupling dimensions (and even multiple dimensions) \cite{KRU:Wave-Like:2015}, a form we will exploit in this paper.

\subsection{Outline}

The proof is organised as follows. In \cref{sec:peeling-operator} we idealise the peeling process by switching to a tree-like distributional limit of our hypergraphs, and capture the essential behaviour of the process in terms of an operator $\hP$ acting on functions $q\colon ℝ → [0,1]$. In \cref{sec:interpolating-wave} we analyse the effect of iterated application of $\hP$ to functions using the rich toolbox from \cite{KRU:Wave-Like:2015}. This is the main ingredient to proving part \ldl{(i)} of \cref{thm:main} in \cref{sec:wrapping-up}. The comparatively simple part \ldl{(ii)} is independent of these considerations and is proved in \cref{sec:upperbound}.

Finally, in \cref{sec:experiments} we demonstrate how using our hypergraphs can improve the performance of practical retrieval data structures.

%!TEX root=./ms.tex

\section{The Peeling Process and Idealised Peeling Operators}
\label{sec:peeling-operator}

This section examines how the probabilities for vertices of $F_n$ to “survive” $r ∈ ℕ$ rounds of peeling change from one round to the next. In the classical setting, this could be described by a function, mapping the old survival probability to the new one \cite{Molloy05:Cores-in-random-hypergraphs}. In our case, however, there are distinct survival probabilities $q(x)$ depending on the position $x$ of the vertex. Thus we need a corresponding operator $\hP$ that acts on such functions~$q$. %Conveniently, $\hP$ will be independent of $n$ and $z$.

We almost always suppress $k,ℓ,c,z$ in notation outside of definitions. Big-$\O$ notation refers to $n → ∞$ while $k,ℓ,c,z$ are constant.

Consider the parallel peeling process $\peel(F_n,ℓ)$ on $F_n = F(n,k,c,z)$. In each \emph{round} of $\peel(F_n)$, all vertices of degree at most $ℓ$ are determined and then deleted simultaneously. Deleting a vertex implicitly deletes all incident hyperedges. We also define the \emph{$r$-round rooted peeling process} $\peel_{v,r}(F_n,ℓ)$ for any vertex $v ∈ V$ and $r ∈ ℕ$. In round $1 ≤ r' ≤ r-1$ of $\peel_{v,r}(F_n)$, only vertices with distance $r-r'$ from $v$ are considered for deletion. Moreover, in round $r$, the root vertex $v$ is only deleted if it has degree at most $ℓ-1$, not if it has degree~$ℓ$.

For any vertex position $x ∈ X = [0,z+1)$ and $r ∈ ℕ$ we let $\q{r}(x) = \q{r}(x,n,k,ℓ,c,z)$ be the probability that the vertex $v = \lfloor \frac{x}{z+1}n\rfloor$ survives $\peel_{v,r}(F_n)$, i.e.~is not deleted. It is convenient to define $\q{0}(x) = 1$ for all $x ∈ X$, i.e.~every vertex survives the “0-round peeling process”. To get an intuition for how $\q{r}(x)$ evolves with $r$, consider \cref{fig:couplingPhenomenon}   Even though $\q{r}$ is discrete in $x$ by definition, we will later see that it has a continuous limit for $n → ∞$.

% \begin{itemize}
%         • The probability that a vertex $v ∈ V$ survives $r$ rounds only depends on its segment $i$, meaning $\q{r}(x)$ is well-defined. The formal reason is that for any permutation $π$ of $V$ respecting segments, the distributions of $F$ and the vertex-relabeled hypergraph $π(F)$ are identical.
%         • For reasons that will become clear later, we permit $i ∈ ℤ \setminus I$, even though no vertices of that segment exist in $F$. We imagine such vertices to be deleted before the process starts, i.e.~$\q{r}(x) = 0$ for such $i$, regardless of $r ∈ ℕ₀$. In particular, we have $\q{0}(x) = 𝟙_{x ∈ X}$.
% \end{itemize}
 
Whether a vertex $v$ at position $x$ survives $\peel_{v,r}$ is a function of its $r$-neighbourhood $F_n(x,r)$, i.e.~the subhypergraph of $F_n$ that can be reached from $v$ by traversing at most $r$ hyperedges.

It is natural to consider the distributional limit of $F_n(x,r)$ to get a grip on $\q{r}(x)$. In the spirit of the objective method \cite{AS:Objective_Method:2004}, we identify a (possibly infinite) random tree $T_x$ that captures the local characteristics of $F_n(x,r)$ for $n → ∞$. In the following $\Po(λ)$ refers to the Poisson distribution with mean $λ ∈ ℝ^+$.
 
% See \cite{AS:Objective_Method:2004} for a good survey with examples and details on how to formally define the underlying topology and metric space.
 
% The vertex $v$ can be incident to edges of segments $j ∈ \{i-k+1,…,i\} ∩ Y$. For each such $j$, the number $d_j$ of edges of type $j$ incident to $v$ has distribution $d_j \sim \Bin(czn, \frac{1}{zn})$, which converges to  $\Po(c)$. An edge of type $j$ incident to $v$ is incident to $k-1$ \emph{other} vertices of segments $\{j,j+1,…,j+k-1\} - \{i\}$. While it is possible that two edges incident to $v$ share a vertex other than $v$, the probability for this is $o(1)$. Now take a vertex $v'$ of segment $i'$ at distance $1$ from $v$. It is not hard to see that, given that we revealed only a tiny fraction of $H$ so far, the number of as of yet unseen edges of type $j' ∈ \{i'-k+1,…,i'\} ∩ Y$ incident to $v'$ is, up to error terms of $o(1)$, again distributed according to $\Po(c)$. \todo{What exactly am I trying to sketch here? How annoying would a \emph{real} proof of Lemma \ref{lem:localConvergenceOfH} be?} This motivates the following definition.

 \begin{Definition}[Limiting Tree]
    \label{def:coupling-T}
    Let $k ∈ ℕ$, $c, z ∈ ℝ^+, X = [0,z+1), Y = [\frac 12,z + \frac 12)$ and $x ∈ X$. The random (possibly infinite) hypertree $T_x = T_x(k,c,z)$ is distributed as follows.
    
    $T_x$ has a root vertex $\root(T_x)$ at position $x$, which for $Y_x ≔ [x-\frac 12,x+\frac 12] ∩ Y$ has $d_x \sim \Po(ck|Y_x|)$ \emph{child hyperedges} with positions uniformly distributed in $Y_x$.\footnote{In other words: The positions of the child hyperedges are a Poisson point field on $Y_x$ with intensity $ck$. By $|I|$ for an interval $I = [a,b]$ we mean $b-a$.\\    
    Note also that the position is now a property of a vertex, not an identifying feature. Possibly (though with probability $0$) the tree $T_x$ may contain several vertices with the same position.
    } Each child hyperedge at position $y$ is incident to $k-1$ (fresh) \emph{child vertices} of its own, each with a uniformly random position $x' ∈ [y-\frac 12,y+\frac 12]$. The sub-hypertree at such a child vertex at position $x'$ is distributed recursively (and independently of its sibling-subtrees) according to~$T_{x'}$.
 \end{Definition}

For $x ∈ X$ and $r ∈ ℕ$, let $F_n(x,r)$ and $T_x(r)$ denote the $r$-neighbourhoods of vertex $v = \lfloor \frac{x}{z+1}n\rfloor$ in $F_n$ and $\root(T_x)$ in $T_x$, respectively. In the following, $H$ is an arbitrary fixed rooted hypergraph and equality of hypergraphs indicates a root-preserving isomorphism.

\begin{lemma}
    \label{lem:localConvergenceOfH}
    $\displaystyle∀x ∈ X,r ∈ ℕ,H\colon$\\\phantom{space}$\lim_{n→∞} \Pr[F_n(x,r) = H] = \Pr[T_x(r) = H]$.
\end{lemma}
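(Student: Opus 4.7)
The plan is to prove this by induction on $r$ via a breadth-first exploration of $F_n(x,r)$ that mimics the recursive construction of $T_x$ in \cref{def:coupling-T}. The base case $r = 0$ is trivial. For the inductive step, suppose the first $r$ layers have been revealed, yielding a rooted subhypergraph $H'$ on $O_r(1)$ vertices with specified positions; it suffices to show that the next layer at each frontier vertex $v'$ at position $x'$ is asymptotically distributed as in \cref{def:coupling-T} and that the layers at different frontier vertices are asymptotically independent.

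To handle a single frontier vertex $v'$: a single edge $e$ of $F_n$ picks its position $y_e$ uniform in $Y$ and then its $k$ incidences by choosing offsets uniform in $[-\tfrac12,\tfrac12]$, so the probability of any fixed incidence equalling $v'$ is $(z+1)/n + O(n^{-2})$ whenever $y_e \in Y_{x'}$. Hence $\Pr[v' \in e] = (|Y_{x'}|/z)(1 - (1 - (z+1)/n)^k) = ck|Y_{x'}|/m + O(n^{-2})$ with $m = \lfloor cnz/(z+1)\rfloor$. Summing over the $m - O_r(1)$ independent unexplored edges and applying the classical Poisson limit theorem to a sum of i.i.d.\ Bernoullis with vanishing success probability yields $\deg(v') \to \Po(ck|Y_{x'}|)$ in distribution. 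Conditional on incidence, $y_e$ becomes asymptotically uniform on $Y_{x'}$, and the $k-1$ sibling incidences are asymptotically uniform on $[y_e - \tfrac12, y_e + \tfrac12]$, matching the distribution of the child hyperedges and child vertices of $\root(T_{x'})$. Independence across distinct frontier vertices follows because a single edge is incident to more than one frontier vertex with probability only $O_r(1/n)$, so the joint degree vector factors into independent Poissons in the limit.

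What remains is to exclude collisions: the event that BFS discovers a vertex of $H'$ again, or that two newly revealed hyperedges share a non-parent incidence, both of which would produce cycles absent in $T_x$. Since $|V(H')| = O_r(1)$ throughout and each new incidence is effectively uniform over $\Theta(n)$ candidate vertices, a union bound shows the collision probability is $O_r(1/n)$. On the complementary event the explored hypergraph is a tree and matches $T_x(r+1)$ in distribution up to $o(1)$ total variation error, completing the induction. The main technical obstacle is the sequential conditioning: each incidence revealed biases the position distribution of subsequent draws. A clean workaround is to first Poissonize the edge count, replacing $m$ by $\Po(m)$; then disjoint position windows become independent Poisson processes, the induction factors cleanly, and a standard de-Poissonization argument restores the original model at $o(1)$ total-variation cost.
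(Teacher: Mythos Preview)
Your proposal is correct and follows essentially the same approach as the paper: an inductive breadth-first coupling between $F_n(x,r)$ and $T_x(r)$, computing the binomial-to-Poisson limit for frontier-vertex degrees, matching the uniform position distributions, and dismissing cycles and dependencies as $o(1)$ events (the paper likewise invokes Poissonisation and defers the routine details to \cite{Leconte:Cuckoo:2013,K:Poisson:2006}). If anything, your sketch is more explicit than the paper's about the conditional position law and the collision union bound.
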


\begin{proof}[Sketch of Proof.]
    We construct for fixed $r,x$ and $H$ a random coupling\footnote{There is no relation to the term spatial coupling. We refer to the standard technique where several random variables are realised on the same probability space.} between $F_n(x,r)$ and $T_x(r)$ such that the symmetric difference between the events $\{F_n(x,r) = H\}$ and $\{T_x(r) = H\}$ has probability $o(1)$.
    We do so inductively, by following a sequence of events. The $i$-th event expresses, firstly, that $F_n(x,r)$ and $T_x(r)$ agree with $H$ concerning the first $i$ rounds of a breadth-first search traversal and, secondly, that corresponding active vertices in $F_n(x,r)$ and $T_x(r)$ have positions with distance $\O(1/n)$.
    
    For the first step, consider the root $v$ of $F_n(x,r)$. By construction, any hyperedge containing $v$ must have a position $y ∈ [x-\frac 12, x+\frac 12]$. For $x ∈ [0,1)$ or $x ∈ [z,z+1)$ the potential positions are further restricted by the upper and lower bounds on hyperedge positions, i.e.~we have $y ∈ Y_x ≔ [x-\frac 12,x+\frac 12] ∩ Y$. In order for a random hyperedge $e$ to contain $v$, two things have to work out:
    \begin{enumerate}[(1)]
            • The position of $e$ must fall within $Y_x$. The probability for this is $|Y_x|/|Y| = |Y_x|/z$.
            • At least one of the $k$ incidences of $e$ must turn out to be to $v$. The probability for this is $1-(1-\frac{z+1}{n})^k$.
    \end{enumerate}
    With $cn\frac{z}{z+1}$ hyperedges in total, we obtain a binomial distribution $\deg(v) \sim \Bin\big(cn\frac{z}{z+1},\allowbreak|Y_x|/z(1-(1-\frac{z+1}n)^k)\big)$. This distribution converges, for $n → ∞$, to $\Po(ck|Y_x|)$, which is the distribution of $\deg(\root(T_x))$. The positions of the neighbours of $v$ are uniformly distributed in the discrete set $|Y_x| ∩ (\frac{z+1}{n}ℤ)$, the positions of the neighbours of $\root(T_x)$ uniformly in the interval $|Y_x|$. It should now be easy to see how a coupling between $F_n(x,1)$ and $T_x(1)$ could look like.
    
    There are three complications when continuing the argument: \ldl{(i)} The discrepancies between vertex positions of $F_n(x,r)$ and $T_x(1)$ need to be kept in check. \ldl{(ii)} $F_n(x,r)$ may contain cycles\footnote{This can actually already occur for $r = 1$.}. \ldl{(iii)} There are slight dependencies between vertex degrees in $F_n(x,r)$.\ \ It should be intuitively plausible that these problems vanish in the limit. We refer to \cite{Leconte:Cuckoo:2013} for a full argument     showing a similar convergence and to \cite{K:Poisson:2006} for the related technique of Poissonisation.
\end{proof}
  
\noindent We now consider the idealised peeling processes $(\peel_{\root(T_x),r}(T_x))_{x ∈ X}$. Their survival probabilities are easier to analyse than those of $\peel_{v,r}(F_n)$.
 
 \begin{lemma}
    \label{lem:peelingT}
    Let $r ∈ ℕ₀$ be constant and $\q{r}_T(x) = \q{r}_T(x,k,ℓ,c,z)$ be the probability that $\root(T_x)$ survives $\peel_{\root(T_x),r}(T_x,ℓ)$ for $x ∈ X$. Then for $x ∈ X$
    \[ \q{r+1}_T(x) = Q\Bigg(ck
        \!\!\!\!\!\! \int\limits_{[x-\frac 12,x+\frac12]∩Y} \!\!\!\!\!\!
        \bigg( \int_{y-\frac 12}^{y+\frac 12}\q{r}_T(x') dx' \bigg)^{k-1} dy,\ \ ℓ\Bigg).\]
    where $Q(λ,ℓ) = 1- \sum_{i < ℓ} \frac{λ^i}{i!} = \Pr[\Po(λ) ≥ ℓ]$, the latter term slightly abusing notation.
 \end{lemma}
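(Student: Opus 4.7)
My plan is to analyze $\peel_{\root(T_x), r+1}(T_x, \ell)$ by decomposing over the root's child hyperedges and subtrees. The root survives iff after $r$ rounds of peeling at least $\ell$ of its child hyperedges still connect to it, and the number of such surviving child hyperedges will turn out to be Poisson-distributed.

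The key reduction to establish first is: a child vertex $u$ of the root, sitting at the root of a subtree distributed as $T_{x'}$, is deleted in one of the first $r$ rounds of $\peel_{\root(T_x), r+1}(T_x, \ell)$ if and only if $u$ does not survive $\peel_{u, r}(T_{x'}, \ell)$. The first $r$ rounds of the parent process consider only vertices at distance $r, r-1, \ldots, 1$ from the root — equivalently, vertices at distance $r-1, r-2, \ldots, 0$ from $u$ within $u$'s subtree. Crucially, the hyperedge $e$ joining $u$ to the root is not deleted during those $r$ rounds (its other incidences are child vertices of the root, considered only in round $r$, and the root itself is considered only in round $r+1$), so $u$'s degree at the start of round $r$ in the parent process exceeds its degree in $\peel_{u, r}(T_{x'})$ by exactly $1$. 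This $+1$ is precisely absorbed by the asymmetric root threshold: $u$ is deleted in round $r$ of the parent process iff its current degree is $\leq \ell$, equivalently its subtree degree is $\leq \ell - 1$, which is exactly the deletion criterion for the root in $\peel_{u, r}(T_{x'})$.

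Given this reduction, the probability that a specific child hyperedge $e$ at position $y$ retains its incidence to the root after $r$ rounds equals the probability that all $k-1$ of its non-root incident vertices survive. Since these vertices have i.i.d.\ uniform positions in $[y - \tfrac 12, y + \tfrac 12]$ and head mutually independent subtrees, this probability equals $\bigl(\int_{y - 1/2}^{y + 1/2} \q{r}_T(x')\, dx'\bigr)^{k-1}$. The child hyperedges themselves form a Poisson point process of intensity $ck$ on $Y_x = [x - \tfrac 12, x + \tfrac 12] \cap Y$ (their count is $\Po(ck|Y_x|)$ with i.i.d.\ uniform positions on $Y_x$), so Poisson thinning yields that the number $S$ of surviving child hyperedges is distributed as $\Po(\lambda)$ with $\lambda = ck \int_{Y_x} \bigl(\int_{y - 1/2}^{y + 1/2} \q{r}_T(x')\, dx'\bigr)^{k-1}\, dy$.

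Finally, by the asymmetric definition of the rooted process, the root is deleted in round $r+1$ iff $S \leq \ell - 1$, so $\q{r+1}_T(x) = \Pr[S \geq \ell] = Q(\lambda, \ell)$, matching the claimed identity. The main subtlety to get right is the reduction step, where one matches up deletion events across the two rooted peeling processes and verifies that the $+1$ offset in $u$'s degree is neutralized by the $\ell \leftrightarrow \ell - 1$ shift in the root-deletion threshold. The remaining steps — mutual independence of subtrees via the recursive definition of $T_x$, Poisson thinning, and writing the Poisson tail as $Q$ — are routine.
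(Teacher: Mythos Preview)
Your proposal is correct and follows essentially the same approach as the paper: both argue that the $r$-round rooted process on each child's subtree $W_i$ coincides with the first $r$ rounds of $\peel_{v,r+1}(T_x)$ restricted to $W_i$, with the extra parent hyperedge $e$ accounting for exactly the $+1$ degree offset at $w_i$ that the asymmetric root threshold absorbs, and then both use Poisson thinning of the child-hyperedge point process to obtain $Q(\lambda,\ell)$. One minor wording issue: $e$ \emph{can} be deleted in round~$r$ (if a sibling $w_j$ is removed then), but what you actually use---and what is true---is that $e$ is still present at the \emph{start} of round~$r$ when $u$'s degree is assessed, so the argument goes through.
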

 
 \begin{proof}
    Let $x ∈ X$ and $v = \root(T_x)$. Assume $y ∈ [x-\frac 12,x+\frac 12] ∩ Y$ is the type of some hyperedge $e$ incident to $v$. Hyperedge $e$ survives $r$ rounds of $\peel_{v,r+1}(T_x)$ if and only if all of its incident vertices survive these $r$ rounds. Since $v$ itself may only be deleted in round $r+1$, the relevant vertices are the $k-1$ child vertices $w₁,…,w_{k-1}$ with positions uniformly distributed in $[y-\frac 12,y+\frac 12]$. Let $W_i$ be the subtree rooted at $w_i$ for $1 ≤ i < k$. Consider the peeling process $\peel_{w_i,r}(W_i)$. Assume the process deletes $w_i$ in round $r$, meaning $w_i$ has degree at most $ℓ-1$ at the start of round $r$. Then $w_i$ has degree at most $ℓ$ at the start of round $r$ in $\peel_{v,r+1}(T_x)$, meaning $\peel_{v,r+1}(T_x)$ deletes $e$ in round $r$. Conversely, if none of $\peel_{w₁,r}(W₁),…,\peel_{w_{k-1},r}(W_{k-1})$ delete their root vertex within $r$ rounds, then $w₁,…,w_{k-1}$ have degree at least $ℓ+1$ after round $r$ of $\peel_{v,r+1}(T_x)$ and $e$ survives round $r$ of $\peel_{v,r+1}(T_x)$. Since the position of each $w_i$ is independent and uniformly distributed in $[y-\frac 12,y+\frac 12)$, the probability for $e$ to survive is  $p_{y} ≔ (\int_{y-\frac 12}^{y+\frac 12}\q{r}_T(x') dx' )^{k-1}$. Since the positions of the hyperedges incident to $v$ are a Poisson point field on $[x-\frac 12,x+\frac 12] ∩ Y$ with intensity $ck$, the number of incident hyperedges \emph{surviving} round $r$ of $\peel_{v,r+1}(T_x)$ has Poisson distribution with mean $λ ≔ \int_{[x-\frac 12,x+\frac 12] ∩ Y} ckp_{y} dy$.
    
    The claim now follows by observing that $v$ survives $r+1$ rounds of $\peel_{v,r+1}(T_x)$ if it is incident to at least $ℓ$ hyperedges surviving $r$ rounds. The probability for this is $Q(λ,ℓ)$.
 \end{proof}
 %
% , for $k,ℓ ∈ ℕ$ with $k+ℓ ≥ 4$ and $c,z ∈ ℝ^+$, 
For convenience, we define the operator ${\P} ={\P}(k,ℓ,c,z)$, which maps any (measurable\footnote{All functions that play a role in our analysis are measurable. We refrain from pointing this out from now on.}) $q \colon X → [0,1]$ to $\P q \colon X → [0,1]$ with %\text{for } x ∈ X
 \[ (\P q)(x) = Q\Bigg(ck
 \!\!\!\!\!\int\limits_{[x-\frac 12,x+\frac 12]∩Y}\!\!\!\!\!
 \bigg( \int_{y-\frac 12}^{y+\frac 12} q(x') dx' \bigg)^{k-1} dy,\ \ ℓ\Bigg).\]
 Together Lemmas \ref{lem:localConvergenceOfH} and \ref{lem:peelingT} imply that $\P$ can be used to approximate survival probabilities.
 \def\err{\mathrm{err}}
 \begin{corollary}
    \label{cor:approxOfq}
    Let $r ∈ ℕ₀$ be constant. Then for all $x ∈ X$
    \begin{align*}
        \P^r \q{0}(x) &\stackrel{\textup{def}}= \P^r \q{0}_T(x)\\
        &\!\!\!\!\stackrel{\textup{Lem\,\ref{lem:peelingT}}}= \q{r}_T(x) \stackrel{\textup{Lem\,\ref{lem:localConvergenceOfH}}}= \q{r}(x)\pm o(1).
    \end{align*}
 \end{corollary}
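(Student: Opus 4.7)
The plan is to chase through the three equalities in turn. For the first, since the 0-round peeling process deletes nothing, $\q{0}(x) = 1 = \q{0}_T(x)$ for all $x$, so $\P^r$ applied to either function yields the same thing — this is the ``$\stackrel{\textup{def}}=$'' step. For the second equality $\P^r \q{0}_T = \q{r}_T$, I would induct on $r$: the base case is trivial, and the inductive step is literally the content of \cref{lem:peelingT}, which asserts $\q{r+1}_T = \P \q{r}_T$, whence $\q{r+1}_T = \P \cdot \P^r \q{0}_T = \P^{r+1} \q{0}_T$.

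The third equality $\q{r}_T(x) = \q{r}(x) \pm o(1)$ is where the substance lies. The key observation is that whether the root of a rooted hypergraph $H$ survives the $r$-round rooted peeling process is a deterministic $\{0,1\}$-valued function $s(H)$ of the rooted isomorphism type of $H$. Since the process only inspects the $r$-neighborhood of the root, we can write
\begin{align*}
\q{r}(x)   &= \sum_H s(H) \Pr[F_n(x,r) = H], \\
\q{r}_T(x) &= \sum_H s(H) \Pr[T_x(r) = H],
\end{align*}
summed over rooted iso-types of finite hypergraphs. \cref{lem:localConvergenceOfH} provides termwise convergence for each fixed $H$, but passing from termwise to series convergence requires uniform tail control on both distributions over iso-types.

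The main obstacle is precisely this tail bound. I would argue that $|T_x(r)|$ is stochastically dominated by a Galton--Watson tree of depth $r$ in which each vertex spawns at most $\Po(ck)$ hyperedges, each carrying $k-1$ fresh children; with $k,c,z,r$ all fixed, this yields exponentially decaying tails on $|T_x(r)|$. A parallel estimate on $|F_n(x,r)|$ — using that the degree of any fixed vertex in $F_n$ is asymptotically $\Po(ck|Y_x|)$ and that hyperedges are independent — gives matching tails uniformly in $n$. Consequently, for every $\varepsilon > 0$ there is a finite collection $\mathcal{H}_\varepsilon$ of iso-types carrying all but $\varepsilon$ of the mass of both distributions for all sufficiently large $n$. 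Summing the termwise convergence of \cref{lem:localConvergenceOfH} over $H \in \mathcal{H}_\varepsilon$ and bounding the tail using $0 \le s \le 1$ yields $|\q{r}(x) - \q{r}_T(x)| \le 2\varepsilon + o(1)$; letting $\varepsilon \to 0$ completes the argument.
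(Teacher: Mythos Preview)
Your proof is correct and follows exactly the route the paper indicates via the labeled equalities in the statement itself; the paper gives no further argument and treats the corollary as immediate from \cref{lem:localConvergenceOfH,lem:peelingT}. Your tightness argument for the third equality is sound but heavier than needed: since \cref{lem:localConvergenceOfH} gives pointwise convergence of probability mass functions on a countable space (finite rooted iso-types) to a genuine probability measure, Scheff\'e's lemma already yields total variation convergence, and hence convergence of $\E[s(\cdot)]$ for the bounded survival indicator~$s$ without any explicit tail estimate.
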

\noindent To obtain \emph{upper} bounds on survival probabilities, we may remove the awkward restriction “$∩\,Y$” in the definition of $\P$. We define $\hP = \hP(k,ℓ,c)$ as mapping any $q \colon ℝ → [0,1]$ to $\hP q \colon ℝ → [0,1]$ with
\[ (\hP q)(x) = Q\Bigg(ck \int_{x-\frac 12}^{x+\frac 12} \bigg( \int_{y-\frac 12}^{y+\frac 12} q(x') dx' \bigg)^{k-1} dy,\ \ ℓ\Bigg)\]
% for $x ∈ ℝ$.
Note that $\hP$ does not depend on $z$ or $n$. To simplify notation, we assume that the old operator $\P$ also acts on functions $q \colon ℝ → [0,1]$, ignoring $q(x)$ for $x ∉ X$, and producing $\P q \colon ℝ → [0,1]$ with $\P q(x) = 0$ for $x ∉ X$. We also extend $\q{0}$ to be $𝟙[x ∈ {X}] \colon ℝ → [0,1]$, i.e.~the characteristic function on $X$, essentially introducing vertices at positions $x ∉ X$ which are, however, already deleted with probability $1$ before the first round begins.
Note that while $\q{r}(x)$ and $\q{r}_T(x)$ are by definition non-increasing in $r$, this is not the case for $(\hP^r \q{0})(x)$. For instance, $\hP^{r} \q{0}$ has support $(-r,z+1+r)$, which grows with $r$.\footnote{It is still possible to interpret $\hP^r \q{0}(x)$ as survival probabilities in more symmetric, extended versions $\hat{T}_x$ of the tree $T_x$, but we will not pursue this.}
%, treating $\hP$ as a purely abstract simplification of $\P$ suitable to obtain upper bounds.
The following lemma lists a few easily verified properties of $\hP$. All inequalities between functions should be interpreted point-wise.
\begin{lemma}
    \label{lem:properties-hP}
    \begin{enumerate}[(i)]
            • $∀q\colon ℝ → [0,1]\colon \P q ≤ \hP q$.
            % • $\hP$ commutes with the shift operators $\sL_s$ and $\sR_s$ defined via $(\sL_s q)(x) = q(x+s)$ and $(\sR_s q)(x) = q(i-s)$ for $s ∈ ℝ_{≥ 0}$. In other words, we have $∀q: ℝ → [0,1] \colon \hP (\sL_s q) = \sL_s (\hP q)  ∧ \hP (\sR_s q) = \sR_s (\hP q)$.
            • $\P$ and $\hP$ are monotonic, i.e.~$∀q,q'\colon ℝ → [0,1] \colon$\\
            $q ≤ q' ⇒ \P q ≤ \P q' ∧ \hP q ≤ \hP q'$.
            % • $\hP$ respects monotonicity, i.e.~if $q(x)$ is (strictly) increasing in $x$, then so is $(\hP q)(x)$.
            • $\P$ and $\hP$ are continuous, i.e.~pointwise convergence of $(q_i)_{i ∈ ℕ}$ to $q^*$ implies pointwise convergence of $(\P q_i)_{i ∈ ℕ}$ and $(\hP q_i)_{i ∈ ℕ}$ to $\P q^*$ and $\hP q^*$, respectively.
    \end{enumerate}
\end{lemma}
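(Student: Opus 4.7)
The plan is to dispatch all three parts from a single observation: the map $\lambda \mapsto Q(\lambda, \ell) = \Pr[\Po(\lambda) \geq \ell]$ is continuous and non-decreasing on $[0,\infty)$ (the Poisson family is stochastically increasing in $\lambda$), while the remaining ingredients used to assemble $\P q$ and $\hP q$ from $q$ — integrating a non-negative function against Lebesgue measure on a bounded interval, raising a value in $[0,1]$ to the $(k-1)$-th power, and multiplying by $ck \geq 0$ — are each individually monotone in the input and continuous under pointwise convergence of uniformly bounded inputs. So each claim reduces to tracking these properties through the nested construction.

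For (i), $\P q$ and $\hP q$ differ only in the outer domain of integration: $\P q$ integrates $y$ over $[x - \tfrac12, x + \tfrac12] \cap Y$, whereas $\hP q$ integrates over the larger interval $[x-\tfrac12, x+\tfrac12]$. Because $q \geq 0$ the integrand is non-negative, so the $\P$-integral is bounded by the $\hP$-integral, and monotonicity of $Q$ in $\lambda$ yields $\P q(x) \leq \hP q(x)$ for $x \in X$. For $x \notin X$ the convention $\P q(x) = 0$ makes the inequality immediate.

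For (ii), I would propagate a hypothesised pointwise inequality $q \leq q'$ through the construction step by step: the inner integrals over $[y-\tfrac12, y+\tfrac12]$ inherit the inequality, raising to the $(k-1)$-th power preserves it since both sides lie in $[0,1]$, integrating in $y$ (over either domain) preserves it, multiplication by $ck \geq 0$ preserves it, and a final application of the monotonicity of $Q$ delivers $\P q \leq \P q'$ and $\hP q \leq \hP q'$.

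For (iii), assuming $q_i \to q^*$ pointwise with $q_i, q^* \colon \mathbb{R} \to [0,1]$, I would apply dominated convergence twice, using the constant dominator $1$ on the compact intervals of integration. First, $\int_{y-1/2}^{y+1/2} q_i(x')\,dx' \to \int_{y-1/2}^{y+1/2} q^*(x')\,dx'$ for each fixed $y$; since these results again lie in $[0,1]$, raising to the $(k-1)$-th power commutes with the limit by continuity; then a second application of dominated convergence transports the limit through the outer integral in $y$. Continuity of $Q(\cdot, \ell)$ at the limiting intensity finishes the argument. There is no real obstacle here — the only thing to check carefully is that the uniform bound $q \leq 1$ is exactly the hypothesis needed to justify dominated convergence at both integration steps.
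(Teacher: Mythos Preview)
Your proof is correct. The paper does not actually prove this lemma; it merely introduces it as ``a few easily verified properties'' and leaves the verification to the reader, so your argument is precisely the kind of routine check the paper is gesturing at.
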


%!TEX root=./ms.tex

\section{Analysis of Iterated Peeling}
\label{sec:interpolating-wave}

The goal of this section is to prove the following proposition.

\begin{proposition}\ 
    \label{prop:convergence-of-q}
    \begin{enumerate}[(i)]
        • For $c < c_{k,ℓ}^*$ and any $z ∈ ℝ^+$, we have $(\P^r q₀)(x) \conv 0$ for all $x ∈ X$.
        • For $c > c_{k,ℓ}^*$ and large $z$, we have $(\P^r q₀)(x) \conv q^*(x)$ for all $x ∈ X$ and some $q^* ≠ 0$.
    \end{enumerate}
\end{proposition}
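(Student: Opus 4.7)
The plan is to first establish monotone convergence to a fixed point, then invoke the KRU framework in both directions. Observe that $\q{0} = \mathbf{1}_X$ vanishes outside $X$ and so does the image of $\P$, giving $\P \q{0} \le \q{0}$ pointwise. Monotonicity of $\P$ (\cref{lem:properties-hP}(ii)) then yields $\P^{r+1} \q{0} \le \P^r \q{0}$, so the pointwise limit $q^* := \lim_{r \to \infty} \P^r \q{0}$ exists, and continuity of $\P$ (\cref{lem:properties-hP}(iii)) implies that $q^*$ is a fixed point of $\P$ supported on $X$. The proposition reduces to showing $q^* \equiv 0$ when $c < \cklo$ (part~(i)), and $q^* \not\equiv 0$ when $c > \cklo$ and $z$ is sufficiently large (part~(ii)).

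For part~(i), I would view $\P$ as the spatial coupling (in the sense of \cite{KRU:Wave-Like:2015}) of the scalar recursion $q \mapsto Q(ckq^{k-1}, \ell)$, the ``constants restriction'' of $\hP$. The key input is that the $\ell$-orientability threshold $\cklo$ agrees with the Maxwell/MAP threshold of this recursion, i.e.\ the value of $c$ at which the associated Bethe-style potential functional $\Phi$ first changes sign at the largest non-zero scalar fixed point. Below $\cklo$, $\Phi$ is strictly positive on every non-zero fixed point. The continuous-coupling threshold saturation theorem of \cite{KRU:Wave-Like:2015} then asserts that, under this positivity, every fixed point of the coupled operator $\P$ supported on a bounded interval must be identically zero; applied to $q^*$ this gives~(i).

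For part~(ii), the strategy is to construct a non-trivial lower barrier. Since $c > \cklo$ strictly exceeds the BP threshold of the scalar recursion, that recursion has a non-zero fixed point $\bar q > 0$. The KRU machinery also furnishes, for $c > \cklo$, a monotone ``interpolating wave'' profile $w \colon \mathbb{R} \to [0, \bar q]$ smoothly connecting $0$ and $\bar q$ which is an approximate fixed point of $\hP$. Truncating $w$ and placing it well inside $X$ yields, for $z$ large enough, a function $\underline{q} \le \q{0}$ with $\underline{q} \not\equiv 0$ and $\P \underline{q} \ge \underline{q}$ pointwise. Then $(\P^r \underline{q})_r$ is pointwise non-decreasing, and its limit $\underline{q}^*$ is a non-zero fixed point of $\P$ dominated by $q^*$, so $q^* \not\equiv 0$, proving~(ii).

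The main obstacle is the bookkeeping needed to embed our concrete operator $\P$ into the abstract framework of \cite{KRU:Wave-Like:2015}: identifying the correct potential functional $\Phi$ in terms of the Poisson-tail $Q(\cdot,\ell)$, verifying the monotonicity and single-crossing hypotheses on which KRU's threshold saturation rests, and extending the argument from the classical $\ell = 1$ coding-theoretic setting to arbitrary $\ell \ge 1$ (which is why the borderline case $(k,\ell) = (2,1)$ has to be excluded by hypothesis). Once that dictionary is in place, both parts of the proposition follow as direct applications of the KRU theorems.
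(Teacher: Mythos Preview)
Your proposal is correct and follows essentially the same route as the paper: establish monotone convergence of $(\P^r q_0)_r$ to a fixed point via \cref{lem:properties-hP}, identify the potential with the one from \cite{KRU:Wave-Like:2015}, verify its sign properties relative to $\cklo$ (the paper does this in \cref{lem:analysis-of-f-g}, which is exactly the ``bookkeeping'' you anticipate), and then invoke the KRU threshold-saturation theorems. For part~(ii) there is a small stylistic difference: you sketch the construction of a sub-solution $\underline{q}$ from a truncated interpolating wave and iterate upward, whereas the paper simply cites \cite[Thm.~14]{KRU:Wave-Like:2015}, which already furnishes a nonzero fixed point $q^*$ of $\P$ directly for large $z$ and then uses $\lim_r \P^r q_0 \ge q^*$; your version unpacks what that theorem is doing internally, but the two arguments are equivalent.
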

The intuition is that for $c > c_{k,ℓ}^*$ the peeling process gets stuck, while for $c < c_{k,ℓ}^*$ all vertices are eventually peeled.

Conveniently, iterations such as the one given by $\P$ and $\hP$ were extensively studied in a stunning paper by Kudekar, Richardson and Urbanke \cite{KRU:Wave-Like:2015}. For some initial function $f^{(0)} \colon ℝ → [0,1]$ and non-decreasing functions $h_f, h_g \colon [0,1] → [0,1]$ they study the sequence of functions
\begin{align}
    g^{(r)}(y) &≔ h_g((f^{(r)} ⊗ ω)(y))\label{eq:coupledSystem}\\
    f^{(r+1)}(x) &≔ h_f((g^{(r)} ⊗ ω)(x))\notag
\end{align}
where $ω$ is an \emph{averaging kernel}, i.e.~an even non-negative function with integral $1$ and $⊗$ is the convolution operator. To apply the theory to our case, we use:
\begin{align*}
    h_f(u) &≔ Q(cku,ℓ) \\ h_g(v) &≔ v^{k-1} \\ ω(x) &= 𝟙[|x| ≤ \frac 12]
    % \begin{cases}
    %     \frac 12 & \text{ if $|x| < \frac 12$}\\
    %     0 & \text{ otherwise.}
    % \end{cases}
\end{align*}
With these substitutions the iteration (\ref{eq:coupledSystem}) satisfies $\hP f^{(r)} = f^{(r+1)}$. If we force the functions $g^{(r)}$, $r ∈ ℕ$, to be zero outside of $Y = [\frac 12,z+\frac 12)$ by replacing (\ref{eq:coupledSystem}) with $g^{(r)}(y) ≔ \min\{𝟙[y ∈ Y], h_g((f^{(r)} ⊗ ω)(y))\}$ we get the system with \emph{two-sided termination}. In this case $\P f^{(r)} = f^{(r+1)}$. The system with \emph{one-sided termination} is defined similarly with $Y = [\frac 12,∞)$.

We remark that nothing in the following depends on the choice of $ω$.\footnote{There is a corresponding flexibility in \cref{def:hypergraphs}. Instead of a hyperedge at position $y$ choosing its incident vertices uniformly at random from $[y - \frac 12, y + \frac 12]$, incidences can be chosen according to an almost arbitrary bounded density function that is symmetric around $y$. For details consider \cite[Definition 2]{KRU:Wave-Like:2015}.}

\subsection{Unleashing Heavy Machinery from Coding Theory}

We plan to delegate the proof of \cref{prop:convergence-of-q} to theorems from \cite{KRU:Wave-Like:2015}. For this, we need to examine the \emph{potential} $ϕ(u,v) = ϕ(h_f,h_g,u,v)$ given as:
\begin{gather*}
    ϕ(u,v) = \int_{0}^u h_g^{-1}(u') du' + \int_0^v h_f^{-1}(v')dv' - uv \\
    \text{ for $0 ≤ u ≤ h_g(1),$\  $0 ≤ v ≤ h_f(1)$.}
\end{gather*}
\begin{figure}
    \includegraphics{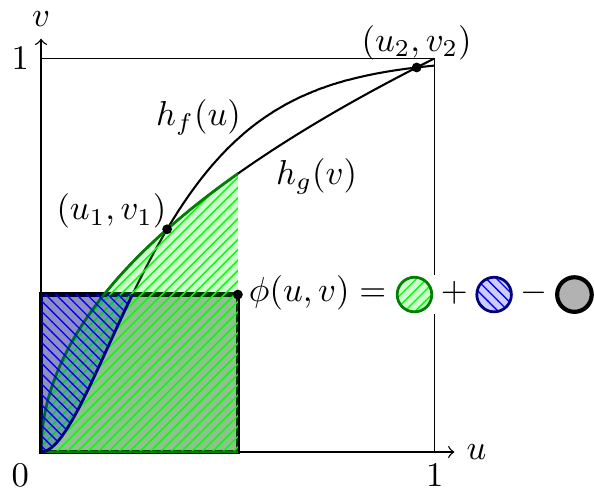}
    \caption[Visualisation of the potential function $ϕ$.]{A plot of the curves $u ↦ (u,h_f(u))$ and $v ↦ (h_g(v),v)$ for $u,v ∈ [0,1]$ with $k = 3$, $ℓ = 2$ and $c = c_{k,ℓ}^*$. The three crossing points of the curves are the solutions $(0,0)$, $(u₁,v₁)$ and $(u₂,v₂)$ to \cref{eq:fixed-point}. The potential $ϕ(u,v)$ can be visualised as the sum of three areas as shown. The significance of the threshold $c_{k,ℓ}^*$ is that the two areas enclosed by the two curves have exactly the same size, or put differently, $ϕ(u₂,v₂) = 0$.}
    \label{fig:potentialPic}
\end{figure}
A visualisation is given in \cref{fig:potentialPic}. Consider the equation
\begin{equation}
    (u,v) = (h_g(v),h_f(u)).\label{eq:fixed-point}
\end{equation}
Clearly it has the \emph{trivial solution} $(u,v) = (0,0)$. By monotonicity of $h_g$ and $h_f$, any two solutions $(u₁,v₁)$ and $(u₂,v₂)$ are component-wise ordered. We write $(u₁,v₁) < (u₂,v₂)$ for $u₁ < u₂ ∧ v₁ < v₂$.
\begin{lemma}
    \label{lem:analysis-of-f-g}
    \begin{enumerate}[(i)]
        • \label{it:extremum-is-sol} Every local minimum $(u,v)$ of $ϕ$ is a solution to \cref{eq:fixed-point}.
        • \label{it:smallest-sol-pos} If \cref{eq:fixed-point} has at least one non-trivial solution, then the smallest non-trivial solution $(u₁,v₁)$ has potential $ϕ(u₁,v₁) > 0$.
        • \label{it:two-sols} \cref{eq:fixed-point} has at most two non-trivial solutions.
        • \label{it:touching-zero} For $c = c_{k,ℓ}^*$ there is a non-trivial solution $(u₂,v₂)$ of \cref{eq:fixed-point} with $ϕ(u₂,v₂) = 0$. In this case, $(0,0)$ and $(u₂,v₂)$ are the only minima of $ϕ$.
        • \label{it:positive} For $c < c_{k,ℓ}^*$ we have $ϕ(u,v) > 0$ for $(u,v) ≠ (0,0)$.
        • \label{it:non-positive} For $c > c_{k,ℓ}^*$ Equation (\ref{eq:fixed-point}) has two non-trivial solutions $(u₁,v₁) < (u₂,v₂)$. They satisfy $ϕ(u₂,v₂) < ϕ(0,0) = 0 < ϕ(u₁,v₁)$.
    \end{enumerate}
\end{lemma}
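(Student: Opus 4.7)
My plan is to reduce each claim to a one-dimensional analysis of the composed map $\Psi_c(v) := v - h_f(h_g(v))$, whose positive zeros parametrize the non-trivial solutions of \cref{eq:fixed-point} via $u = h_g(v)$, and then to exploit the envelope theorem to track how $\phi$ evolves with $c$.

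Part (i) is immediate by direct differentiation: $\partial_u \phi(u,v) = h_g^{-1}(u) - v$ and $\partial_v \phi(u,v) = h_f^{-1}(v) - u$, both vanishing exactly when $(u,v)$ satisfies \cref{eq:fixed-point}. For (iii), with $h_g(v) = v^{k-1}$ and $h_f(u) = Q(cku,\ell)$, I would analyse the shape of $h_f \circ h_g$: for $\ell \geq 2$ the Poisson tail $Q(\cdot,\ell)$ is convex-then-concave (an S-shape), and for $k \geq 3$ the power $v^{k-1}$ is strictly convex. A standard sign-change count (Descartes-type, or a direct derivative-sign argument on $h_f \circ h_g - \mathrm{id}$) yields at most two interior zeros in each admissible regime. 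The single excluded case $(k,\ell) = (2,1)$ corresponds precisely to the situation where both factors degenerate (one linear, one concave), which is why the hypothesis $k+\ell \geq 4$ is imposed.

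For (ii), (iv), (v), (vi), the key technical input is the identity obtained at a fixed point by inverse-function integration by parts:
\[
\phi(u^*, v^*) = \int_0^{v^*} \bigl(h_f^{-1}(v') - h_g(v')\bigr)\, dv'.
\]
Claim (ii) then follows because, on $(0, v_1)$, the two curves $v \mapsto h_f^{-1}(v)$ and $v \mapsto h_g(v)$ cannot cross (by minimality of $v_1$), so the integrand has constant sign; a short expansion near the origin pins this sign to be positive. For (iv)-(vi) I would differentiate along the branch $c \mapsto (u_2(c), v_2(c))$. Since $\nabla_{u,v}\phi$ vanishes there, the envelope theorem reduces to
\[
\tfrac{d}{dc}\phi(u_2(c), v_2(c)) = \int_0^{v_2} \partial_c h_f^{-1}(v')\, dv' < 0,
\]
because $h_f(u) = Q(cku, \ell)$ is strictly increasing in $c$ and hence $h_f^{-1}$ is strictly decreasing in $c$. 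Thus $c \mapsto \phi(u_2(c), v_2(c))$ is strictly monotone on the range of $c$ where a largest non-trivial solution exists, which combined with the classical characterization of the $\ell$-orientability threshold via the Maxwell (area) condition $\phi(u_2, v_2) = 0$ (as rigorously established for random hypergraphs in \cite{DGMMPR:Tight:2010,FP:Sharp:2012}) simultaneously identifies $c^*_{k,\ell}$ and yields all three claims (iv), (v), (vi).

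The main obstacle is two-fold: verifying the sign-change count in (iii) uniformly across all admissible $(k,\ell)$, since the curvature profile of $h_f \circ h_g$ varies qualitatively with the parities of these parameters, and importing the identification of $c^*_{k,\ell}$ with the root of $c \mapsto \phi(u_2(c), v_2(c))$ from the literature in a form compatible with our hypergraph setup. Once those structural facts are in hand, the envelope step and the monotonicity consequences are routine.
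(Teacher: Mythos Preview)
Your handling of (i) and (ii) matches the paper's proof essentially verbatim: compute $\nabla\phi$, then use the integral representation at a fixed point together with a near-origin comparison (the paper phrases it as $h_f(u) = \mathcal{O}(u^\ell)$ versus $h_g^{-1}(u) = \Omega(u^{1/(k-1)})$, which is your ``short expansion'').

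For (iii), however, your sketch is a genuine gap. The ``convex-then-concave S-shape'' heuristic does not cover $\ell = 1$ (where $Q(\cdot,1)$ is concave throughout), and even where it applies, composing an S-shape with a convex power does not by itself bound the number of zeros of $h_f\circ h_g - \mathrm{id}$ without further work---which you flag as your main obstacle. The paper sidesteps the case analysis entirely: substituting $\xi = ck\,v^{k-1}$, the fixed-point equation becomes $\xi/Q(\xi,\ell)^{k-1} = ck$, and one checks by a single differentiation that the left side has exactly one local extremum (its derivative vanishes where $\sum_{i\ge 0}\xi^i/(i+\ell)! = (k-1)(\ell-1)!$, a monotone equation in $\xi$). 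This is uniform in $(k,\ell)$ and requires no curvature bookkeeping.

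For (iv)--(vi) your envelope-theorem route is valid but heavier than needed, and your plan to \emph{import} the identification of $c_{k,\ell}^*$ with the Maxwell point from \cite{DGMMPR:Tight:2010,FP:Sharp:2012} is risky: those references do not state the threshold in this potential form. The paper instead uses the pointwise monotonicity $\phi_c(u,v) > \phi_{c'}(u,v)$ for $c < c'$ and $v\neq 0$ (immediate from monotonicity of $h_f^{-1}$ in $c$), which already gives (v) and (vi) once (iv) is in hand; and for (iv) it \emph{computes} directly that $\phi(u_2,v_2)=0$ forces $k\ell = \xi\,Q(\xi,\ell)/Q(\xi,\ell+1)$, which is precisely the characterisation of $c_{k,\ell}^*$ from \cite{FKP:The_Multiple:2011}. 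So the identification is derived, not cited.
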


\begin{proof}
    \begin{enumerate}[(i)]
    • The partial derivatives of $ϕ$ are $\nabla ϕ(u,v) = (h_g^{-1}(u)-v,h_f^{-1}(v)-u)$. Therefore, the only candidates for local minima of $ϕ$ are the solutions to Equation (\ref{eq:fixed-point}) (it is easy to check that, except for $(u,v) = (0,0)$, there are no local minima at the borders).
    • Assume $(u₁,v₁)$ is the smallest non-trivial solution to \cref{eq:fixed-point}. Considering \cref{fig:potentialPic}, we see that $|ϕ(u₁,v₁)|$ is the area enclosed by $h_f(u)$ and $h_g^{-1}(u)$ for $u ∈ [0,u₁]$. To see that the sign of $ϕ(u₁,v₁)$ is positive, observe that for small values of $u$ we have $h_f(u) = Q(cku,ℓ) = \O(u^ℓ)$ while $h_g^{-1}(u) = Ω(u^{1/(k-1)})$ and thus $h_f(ε) < h_g^{-1}(ε)$ for $ε ∈ (0,u₁)$. This uses $ℓ ≥ 1$, $k ≥ 2$ and $(k,ℓ) ≠ (2,1)$.
    • By expanding $h_f$ and $h_g$ and substituting $ξ = ckv^{k-1}$ we get for $v ≠ 0$:
    \begin{gather*}
        (u,v) = (h_g(v),h_f(u))\\
        ⇒ v = Q(ckv^{k-1},ℓ)\\
        ⇔ \tfrac{ξ}{ck} = Q(ξ,ℓ)^{k-1}\\
        ⇔ \tfrac{ξ}{Q(ξ,ℓ)^{k-1}} = ck.
    \end{gather*}
    To show that the right-most equation has at most two solutions it suffices to show that $\frac{ξ}{Q(ξ,ℓ)^{k-1}}$ has at most one local extremum. If $ξ$ is such an extremum, we get
    \begin{align*}%\allowdisplaybreaks
        &\tfrac{d}{dξ}\tfrac{ξ}{Q(ξ,ℓ)^{k-1}} = 0\\
        ⇒&\ Q(ξ,ℓ)^{k-1} - ξ(k-1)Q(ξ,ℓ)^{k-2}Q'(ξ,ℓ) = 0\\
        ⇒&\ Q(ξ,ℓ) - ξ(k-1)Q'(ξ,ℓ) = 0\\
        ⇒&\sum_{i ≥ ℓ}\tfrac{ξ^i}{i!} - (k-1)ξ^{ℓ}{(ℓ-1)!} = 0\\
        ⇒&\sum_{i ≥ 0}\tfrac{ξ^i}{(i+ℓ)!} = (k-1){(ℓ-1)!}
    \end{align*}
    Since the left hand side is increasing in $ξ$ for $ξ > 0$ while the right hand side is constant, there is exactly one solution $ξ$ as claimed.
    • Recall that $c$ occurs in the definition of $h_f$ and note that $ϕ$ is monotonically decreasing in $c$. It is easy to see that $ϕ$ is nowhere negative for small values of $c$, and negative for some $(u,v)$ if $c$ is large. For continuity reasons and because $ϕ(u,v) ≥ 0$ for $u,v ∈ [0,ε]$ with $ε = ε(c)$ small enough (using similar arguments as in \itRef{it:smallest-sol-pos}), there must be some intermediate value $c$ where $ϕ(u₂,v₂) = 0$ for a local minimum $(u₂,v₂) ≠ (0,0)$ of $ϕ$. By \itRef{it:extremum-is-sol}, $(u₂,v₂)$ is a solution of \cref{eq:fixed-point}. By \itRef{it:smallest-sol-pos} there must be a smaller solution $(u₁,v₁)$ with $ϕ(u₁,v₁) > 0$. Now by \itRef{it:extremum-is-sol},  and \itRef{it:two-sols}, there cannot be minima of $ϕ$ in addition to $(0,0)$ and $(u₂,v₂)$. The only thing left to show is $c = c_{k,ℓ}^*$.
    
    We rewrite the potential at $(u₂,v₂)$, using \cref{eq:fixed-point}
    \begin{align*}
        ϕ(u₂,v₂) &= \int_{0}^{u₂} h_g^{-1}(u) du + \int_0^{v₂} h_f^{-1}(v)dv - u₂v₂\\
        &= \Big(u₂v₂ - \int_{0}^{v₂} h_g(v) dv\Big)\\
        &\phantom{=}+ \Big(u₂v₂ - \int_0^{u₂} h_f(u)du\Big) - u₂v₂\\
        &= v₂h_g(v₂) - H_g(v₂) - H_f(h_g(v₂)),
    \end{align*}
    where $H_g$ and $H_f$ are antiderivatives of $h_g$ and $h_f$, i.e:
    \begin{align*}
        H_g(v) &= \int h_g(v) dv = \tfrac{1}{k}v^{k} \\ 
        H_f(u) &= \int h_f(u) du = u - \tfrac{1}{ck}\sum_{i = 1}^{ℓ} Q(cku,i).
    \end{align*}
    The fact that $\int_{0}^λ Q(x,ℓ) dx = λ - \sum_{i = 1}^{ℓ} Q(λ,i)$ can be seen by induction on $ℓ$. We now examine the implications of $ϕ(u₂,v₂) = 0$.
    
    In the following calculation let $ξ ≔ ckv₂^{k-1}$ which implies $Q(ξ,ℓ) = v₂$.
    {\allowdisplaybreaks
    \begin{align*}
        0 &= ϕ(u₂,v₂) = v₂h_g(v₂) - H_g(v₂) - H_f(h_g(v₂))\\
         &= v₂^k - v₂^k/k - v₂^{k-1} + \tfrac{1}{ck}\sum_{i = 1}^{ℓ} Q(ckv₂^{k-1},i)\\
        ⇒ 0 &= ξv₂ - ξv₂/k - ξ + \sum_{i=1}^{ℓ} Q(ξ,i)\\[-6pt]
        &= ξQ(ξ,ℓ) - ξQ(ξ,ℓ)/k - ξ + \sum_{i=1}^{ℓ} Q(ξ,i)\\[-3pt]
        ⇒ \phantom{0}&ξQ(ξ,ℓ)/k = ξ(Q(ξ,ℓ)-1) + \sum_{i=1}^{ℓ} Q(ξ,i)\\
        &= -\e^{-ξ}\sum_{j = 0}^{ℓ-1}\frac{ξ^{j+1}}{j!} + \sum_{i=1}^{ℓ}\big(1 -\e^{-ξ}\sum_{j=0}^{i-1} \frac{ξ^{j}}{j!}\big)\\
        &= ℓ - e^{-ξ}\sum_{j=0}^{ℓ-1} (\frac{ξ^{j+1}}{j!} + (ℓ-j)\frac{ξ^j}{j!})\\ 
        %= ℓ - e^{-ξ}\sum_{j=0}^{ℓ-1} (\frac{ξ^{j+1}}{j!} -\frac{ξ^j}{(j-1)!} + ℓ\frac{ξ^j}{j!})\\
        &= ℓ - e^{-ξ}\bigg( \frac{ξ^{ℓ}}{(ℓ-1)!} + \sum_{j=0}^{ℓ-1} ℓ\frac{ξ^j}{j!}\bigg)\\
        &= ℓ - ℓe^{-ξ}\sum_{j=0}^{ℓ} \frac{ξ^j}{j!} = ℓQ(ξ,ℓ+1)\\
        ⇒ kℓ &= \frac{ξQ(ξ,ℓ)}{Q(ξ,ℓ+1)}.
    \end{align*}}
    The last equation characterises the threshold $c_{k,ℓ}^*$ for $ℓ$-orientability of random $k$-uniform hypergraphs, see for instance \cite{FKP:The_Multiple:2011}. Thus $c = c_{k,ℓ}^*$ follows.
    % • Note that in \ldl{(iv)}, the solution $(u₂,v₂)$ is a local minimum of $ϕ$ and $ϕ(u,v) ≥ 0$ for all $u,v$. The claim at hand follows by observing that $ϕ$ increases monotonically when $c$ decreases (and $ϕ(u₂,v₂)$ increases strictly).
    • We now make the dependence of $ϕ_c(u,v)$ on $c$ explicit. For monotonicity reasons we have $ϕ_c(u,v) > ϕ_{c'}(u,v)$ whenever $c < c'$ and $v ≠ 0$.
    Since $ϕ_{c_{k,ℓ}^*}$ is positive except for its two roots at $(0,0)$ and $(u₂,v₂)$, for $c < c_{k,ℓ}^*$ the potential $ϕ_c$ is positive except at $(0,0)$.
    % \todo{Don't we need that the minima of $ϕ$ are at $(0,0)$ and $(u₂,v₂)$?}
    • Since $ϕ_{c_{k,ℓ}^*}$ has a non-trivial root, $ϕ_c$ attains negative values for monotonicity reasons. By \itRef{it:extremum-is-sol}, the potential attains its (negative) minimum at a non-trivial solution to \cref{eq:fixed-point}, and by \itRef{it:smallest-sol-pos} it attains a positive value at the smallest non-trivial solution. Thus, the claim follows.
    %For $c > c_{k,ℓ}^*$ we have
    % \[ ϕ_c(u₂(c_{k,ℓ}^*),v₂(c_{k,ℓ}^*)) < ϕ_{c_{k,ℓ}^*}(u₂(c_{k,ℓ}^*),v₂(c_{k,ℓ}^*)) = 0,\]
\end{enumerate}
\vspace{-\baselineskip}\phantom{a}\hfill % custom \qedhere
\end{proof}
We are now ready to prove \cref{prop:convergence-of-q} by recruiting help from \cite{KRU:Wave-Like:2015}.

\begin{proof}[Proof of \cref{prop:convergence-of-q}]
    First note that we have $q₀ ≥ \P q₀$ by definition, which implies $\P^r q₀ ≥ \P^{r+1} q₀$ by monotonicity of $\P$ and induction on $r$. Thus, $\P^r q₀$ is pointwise bounded and decreasing and must converge to a limit $q^*$. As $\P$ is continuous (see \cref{lem:properties-hP}) we have $\P q^* = q^*$.
    \begin{enumerate}[(i)]
            • Let $𝟙 \colon ℝ → \{1\}$ be the $1$-function. First note that for any $x ∈ X$ we have, using properties from \cref{lem:properties-hP} and monotonicity of $h_f$ and $h_g$
            \begin{gather}
                (\P^r q₀)(x) ≤ (\hP^r 𝟙)(x) = (h_f ∘ h_g)^{r}(1)\\\notag
                \conv \max \{u ∈ [0,1] \mid h_f(h_g(u)) = u\}.\label{eq:trivial-upper-bound}
            \end{gather}
            So if the only solution of $h_f(h_g(u)) = u$ is $u = 0$, then we get $\P^r q₀(x) \conv 0$ from this alone. Otherwise, by \cref{lem:analysis-of-f-g} \itRef{it:two-sols}, there are one or two non-trivial solutions, the larger one we denote by $(u₂,v₂)$.
            
            We now apply \cite[Thm.\ 10]{KRU:Wave-Like:2015}%
            \footnote{Strictly speaking, the theorem requires functions $h_f$ and $h_g$ with $h_f(0) = h_g(0) = 0$ and $h_f(1) = h_g(1) = 1$. As the authors of \cite{KRU:Wave-Like:2015} point out themselves, this is purely to simplify notation. We can apply the theorem to our $h_f \colon [0,u₂] → [0,v₂]$ and $h_g \colon [0,v₂] → [0,u₂]$ with $h_f(0) = h_g(0) = 0$ and $h_f(u₂) = v₂, h_g(v₂) = u₂$ after rescaling the axes so $(u₂,v₂)$ becomes $(1,1)$. We will not do so explicitly.}. It requires $ϕ(u,v) > 0$ for $0 ≠ (u,v) ∈ [0,u₂] × [0,v₂]$, which we have shown in \cref{lem:analysis-of-f-g} \itRef{it:positive}. The theorem asserts pointwise convergence of $f^{(r)}$ to zero for any $f^{(0)}\colon ℝ → [0,u₂]$ in the case of one-sided termination. Clearly this implies convergence to zero in the case of two-sided termination as well, i.e.~$\P^r f^{(0)} \conv 0$. Choosing $f^{(0)} = 𝟙 · u₂$ we get
            \begin{align*}
                \lim_{r → ∞} (\P^r q₀) &= \lim_{r → ∞} \P^r \lim_{s → ∞} \P^s q₀\\
                &\refRel{eq:trivial-upper-bound}{≤} \lim_{r → ∞} \P^r f^{(0)} = 0.
            \end{align*}
            • Using \cref{lem:analysis-of-f-g} \itRef{it:non-positive} and \itRef{it:two-sols}, we know there are exactly three solutions $(0,0) < (u₁,v₁) < (u₂,v₂)$ to \cref{eq:fixed-point} and the signs of their potentials are zero, positive and negative, respectively. This is sufficient to apply \cite[Thm.\ 14]{KRU:Wave-Like:2015}\footnote{See previous footnote.}. The theorem asserts the existence of
            % • Note first that if $f^{0}(x) ∈ [0,u₂]$ for all $x ∈ ℝ$ then $f^{r}(x) ∈ [0,u₂]$ for all $r ∈ ℕ$ and $x ∈ ℝ$ as well. Scaling everything up we get a system with fixed points $(0,0), (u₁/u₂,v₁/v₂)$ and $(1,1)$ which is the canonical form required to. It guarantees (after scaling down again) 
            a solution $q^* \colon X → [0,u₂]$ of $\P q^* = q^*$ with $q^*(\frac{z+1}{2}) = u₂ - ε$ for any $ε > 0$, assuming $z = z(ε)$ is large enough.
            
            By monotonicity of $\P$ we have $\lim_{r → ∞} \P^r q₀ ≥ \lim_{r → ∞} \P^r q^* = q^*$.
    \end{enumerate}
    \vspace{-\baselineskip}\phantom{a}\hfill % custom \qedhere
\end{proof}
%!TEX root=./ms.tex

\section{Peelability of \texorpdfstring{$\bm{F_n}$}{Fn} below \texorpdfstring{$\bm{c_{k,ℓ}^*}$}{ckl*}}
\label{sec:wrapping-up}

We now connect the behaviour of system (\ref{eq:coupledSystem}) to the survival probabilities $\q{R}(x)$ we were originally interested in. For $c < c_{k,ℓ}^*$ and any $z ∈ ℕ$,  they can be made smaller than any $δ > 0$ in $R = R(δ,k,ℓ,z,c)$ rounds.

\begin{lemma}
    \label{lem:almost-peelable}
    If $c < c_{k,ℓ}^*$ then $∀z ∈ ℝ^+,δ > 0\colon ∃R,N ∈ ℕ\colon ∀n ≥ N, x ∈ X\colon \q{R}(x) < δ$.
\end{lemma}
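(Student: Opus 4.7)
Fix $c < c_{k,\ell}^*$, $z \in \mathbb{R}^+$, and $\delta > 0$. I will produce $R$ by an analytic uniformization of Proposition~\ref{prop:convergence-of-q}(i), and then $N$ by upgrading Corollary~\ref{cor:approxOfq} to hold uniformly in $x$. Concretely, it suffices to find $R$ with $\sup_{x \in X} \P^R q_0(x) < \delta/2$ and $N$ with $\sup_{x \in X}|\q{R}(x) - \P^R q_0(x)| < \delta/2$ for every $n \geq N$.

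\textbf{Dini step.} Since $\P q_0 \leq q_0$ pointwise (both vanish outside $X$; inside $X$ we have $\P q_0 \leq 1 = q_0$), iterating with the monotonicity of $\P$ from Lemma~\ref{lem:properties-hP}\,\ldl{(ii)} shows that $(\P^r q_0)_{r \geq 1}$ is a pointwise nonincreasing sequence. I will argue that each $\P^r q_0$ with $r \geq 1$ extends to a continuous function on the compact $[0,z+1]$. For any bounded measurable $q$, the inner average $y \mapsto \int_{y-\frac 12}^{y+\frac 12} q(x')\,dx'$ is continuous in $y$ by dominated convergence, and then the outer integral over the moving window $[x-\frac 12, x+\frac 12] \cap Y$ is continuous in $x$; at $x = 0$ and $x = z+1$ the window collapses to a set of Lebesgue measure zero, so the argument of $Q$ vanishes and $\P q(x) = Q(0,\ell) = 0$, matching the convention $\P q(x) = 0$ off $X$. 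So each $\P^r q_0$ for $r \geq 1$ is continuous on $[0, z+1]$ with vanishing boundary values. Proposition~\ref{prop:convergence-of-q}\,\ldl{(i)} gives pointwise convergence to $0$ on $X$, while the endpoints contribute $0$ trivially. Dini's theorem then upgrades the monotone pointwise convergence to uniform convergence on $[0,z+1]$, yielding the desired $R$.

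\textbf{Uniform local-weak-convergence step.} I claim that the $o(1)$ error in Corollary~\ref{cor:approxOfq} can be taken uniform in $x \in X$ for each fixed $R$. Inspecting the proof of Lemma~\ref{lem:localConvergenceOfH}, the coupling between $F_n(x,R)$ and $T_x(R)$ is built by breadth-first exploration of bounded depth $R$; at each step the binomial-to-Poisson error in the number of children and the discrete-to-continuous error in their positions are $O(1/n)$ with constants depending only on $R, k, \ell, c, z$, because the only dependence on $x$ enters through the Lebesgue measure of $[x-\frac 12, x+\frac 12] \cap Y$, which is uniformly bounded. A standard Poisson tail bound, again uniform in $x$, truncates to finite rooted hypergraphs of size at most $M = M(\varepsilon, R)$; summing the per-isomorphism-class coupling errors over these boundedly-many classes then yields $\sup_{x \in X} |\q{R}(x) - \P^R q_0(x)| = o(1)$.

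\textbf{Combining.} The triangle inequality gives $\q{R}(x) \leq \P^R q_0(x) + \delta/2 \leq \delta$ for every $x \in X$ and $n \geq N$. The main technical obstacle is the uniform-in-$x$ upgrade of Corollary~\ref{cor:approxOfq}; the Dini step is routine once the continuous extension of $\P^r q_0$ to the boundary is established.
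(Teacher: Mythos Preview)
Your proof is correct and follows essentially the same route as the paper: invoke Dini's theorem on the monotone sequence $(\P^r q_0)_{r\ge 1}$ to upgrade Proposition~\ref{prop:convergence-of-q}\,(i) to uniform convergence, then feed this into Corollary~\ref{cor:approxOfq} and absorb the $o(1)$ by choosing $N$ large. You are more careful than the paper on two points---extending $\P^r q_0$ continuously to the compact closure $[0,z+1]$ (the paper writes ``$X$ is compact'' although $X=[0,z+1)$), and arguing that the $o(1)$ in Corollary~\ref{cor:approxOfq} is uniform in $x$---but these are refinements of the same argument rather than a different approach.
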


\begin{proof}
    Let $z ∈ ℝ^+$ and $δ > 0$ be arbitrary constants. At first, \cref{prop:convergence-of-q} \ldl{(i)} implies only pointwise convergence $\P^r \q{0}(x) \conv 0$ for all $x ∈ X$. However, since $X$ is compact, since $\P^r \q{0}$ is continuous for $r > 0$ and since the all-zero limit is obviously continuous, basic calculus\footnote{Sometimes referred to as Dini's Theorem after Ulisse Dini (1848 – 1918).} implies uniform convergence, i.e.~there is a constant $R$ such that $\P^R \q{0}(x) ≤ δ/2$ for all $x ∈ X$. Therefore for $x ∈ X$:
    \begin{align*}
        \q{R}(x) &\stackrel{\text{Cor \ref{cor:approxOfq}}}= (\P^R \q{0}) (x) + o(1) ≤ δ/2 + o(1) ≤ δ.
    \end{align*}
    In the last step, we have simply choosen $N ∈ ℕ$ large enough.
\end{proof}
\cref{lem:localConvergenceOfH} only allows us to track $\q{R}$ via $\P^R q₀$ for a \emph{constant} number of rounds $R$. Therefore, we need to accompany \cref{lem:almost-peelable} with the following combinatorial argument that shows that if all but a $δ$-fraction of the vertices are peeled, then with high probability (whp; meaning probability $1-o(1)$) the rest is peeled as well. Arguments such as these are standard, many similar ones can be found for instance in \cite{FKP:The_Multiple:2011,FP:Sharp:2012,Luczak:A-simple-solution,L:A_New_Approach:2012,Luczak:Size-and-connectivity-of-the-k-core:1991,MPW:DoubleHashing:2018,Molloy05:Cores-in-random-hypergraphs}.

\begin{lemma}
    \label{lem:no-tiny-core}
    Let $c ∈ [0,ℓ]$. There exists $δ = δ(k,ℓ,z) > 0$ such that, whp, any subhypergraph of $F_n = F(n,k,c,z)$ induced by at most $δn$ vertices has minimum degree at most~$ℓ$.
    % For any $k ≥ 2, ℓ ≥ 1$ with $k+ℓ ≥ 4$, $z ∈ ℝ^+$ and $c ∈ (0,1)$ there exists $δ = δ(k,ℓ,z) > 0$ such that the following holds whp. For any non-empty set $V' ⊆ V$ of at most $δn$ vertices of $F_n = F(n,k,ℓ,c,z)$, there exists $v ∈ V'$ of degree at most $ℓ$ in the sub-hypergraph of $F_n$ induced by $V'$.
\end{lemma}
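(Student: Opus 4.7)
The plan is to carry out a first-moment argument. For each $s\in\{1,\dots,\lfloor\delta n\rfloor\}$, let $X_s$ denote the number of vertex subsets $S\subseteq V$ with $|S|=s$ such that the subhypergraph of $F_n$ induced by $S$ has minimum degree at least $\ell+1$. A degree-sum argument shows that any such $S$ must contain at least $t(s):=\lceil(\ell+1)s/k\rceil$ hyperedges of $F_n$, so by Markov
$\Pr[X_s\ge 1]\le\binom{n}{s}\binom{m}{t(s)}p_s^{t(s)}$,
where $p_s:=\Pr[e\subseteq S]$ for a single random hyperedge $e$ of $F_n$ (the hyperedges are i.i.d., so the $t(s)$ containments are independent). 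It suffices to show $\sum_{s=1}^{\lfloor\delta n\rfloor}\binom{n}{s}\binom{m}{t(s)}p_s^{t(s)}=o(1)$ for $\delta=\delta(k,\ell,z)$ chosen small enough.

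The main step is to bound $p_s$ using the spatial structure of $F_n$. Condition on the position $y\in Y$ of $e$: the $k$ incidences of $e$ are then i.i.d.\ uniform among the $\approx n/(z+1)$ vertices with position in $[y-\tfrac12,y+\tfrac12]$. Writing $s_y$ for the number of vertices of $S$ in this window,
\[p_s \;=\; \frac{1}{|Y|}\int_Y\!\Bigl(\frac{s_y(z+1)}{n}\Bigr)^{\!k} dy \;\le\; \frac{s^{k-1}(z+1)^k}{z\,n^k}\int_Y s_y\,dy\;\le\;\frac{s^k(z+1)^k}{z\,n^k},\]
using the trivial bounds $s_y\le s$ and $\int_Y s_y\,dy\le s$ (each vertex of $S$ lies in a window $W_y$ for $y$ in an interval of length at most $1$).

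Plugging in $\binom{n}{s}\le(en/s)^s$, $\binom{m}{t(s)}\le(em/t(s))^{t(s)}$, $m\le cn$, $t(s)\ge(\ell+1)s/k$, and the bound on $p_s$, a direct computation gives
\[\Pr[X_s\ge 1]\;\le\;\bigl[D\,(s/n)^{\gamma}\bigr]^s\quad\text{with }\gamma=\frac{k\ell-\ell-1}{k},\]
where $D=D(k,\ell,c,z)$ is a constant (using $c\le\ell$). Because $(k,\ell)\ne(2,1)$ (the case excluded by the theorem), we have $\gamma>0$. Now fix $\delta\in(0,1)$ so small that $D\delta^\gamma\le\tfrac12$, and split the sum at $s_0=\lceil\sqrt{n}\rceil$. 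For $s\le s_0$ we use $(s/n)^\gamma\le n^{-\gamma/2}$ to obtain a geometric series summing to $O(n^{-\gamma/2})=o(1)$; for $s_0<s\le\delta n$ we use $(s/n)^\gamma\le\delta^\gamma$ to obtain a geometric series summing to $O(2^{-\sqrt{n}})=o(1)$. Adding these bounds finishes the proof.

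The only delicate point I anticipate is confirming the bound on $p_s$, since the spatially coupled hypergraph does not have a symmetric hyperedge distribution; however, the one-line calculation via conditioning on the position $y$ and using the trivial $L^1$-bound on $y\mapsto s_y$ handles this cleanly. Everything else is bookkeeping around the usual sparse first-moment bound, with the condition $(k-1)\ell\ge 2$ appearing exactly where it does in comparable threshold arguments such as \cite{Molloy05:Cores-in-random-hypergraphs}.
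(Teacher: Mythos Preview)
Your argument is correct and follows essentially the same first-moment strategy as the paper: bound the expected number of small vertex sets that induce too many hyperedges, using $\binom{n}{s}\le (en/s)^s$ and a bound of the form $(C(s/n)^{k-1})^t$ for the edge-containment part, then observe that the resulting exponent $(k-1)(\ell+1)/k-1=\gamma$ is positive precisely when $(k,\ell)\ne(2,1)$.

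The only notable difference is in how the edge count is organised. The paper enumerates \emph{$k$-tuples from $S$} (at most $2s^k$ of them, after separately disposing of triply repeated hyperedges) and bounds the probability that each fixed tuple is realised as some hyperedge of $F_n$. You instead enumerate \emph{indices of hyperedges} via $\binom{m}{t(s)}$ and bound the probability $p_S$ that a single i.i.d.\ hyperedge lands in $S$, exploiting independence directly. Your variant is arguably cleaner since it sidesteps the duplicate-edge issue entirely, and your conditioning-on-$y$ computation with the $L^1$ bound $\int_Y s_y\,dy\le s$ is a nice way to handle the spatial structure (the paper simply uses the crude bound $p_{\text{tuple}}\le (z+1)^k\ell/n^{k-1}$). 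The final sum and the split into small/large $s$ are equivalent to the paper's, though the paper squeezes out $O(1/n)$ rather than $O(n^{-\gamma/2})$ by splitting at $\Theta(\log n)$ instead of $\sqrt{n}$; for the stated ``whp'' conclusion either suffices.
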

\begin{proof}
    In the course of the proof, we will implicitly encounter positive upper bounds on $δ$ in terms of $k$, $ℓ$ and $z$. Any $δ > 0$ small enough to respect these bounds is suitable. We consider the bad events $W_{s,t}$ that some small set $V' ⊆ [n]₀$ of size $s$ induces $t$ hyperedges for $1 ≤ s ≤ δn$, $\frac{(ℓ+1)s}{k} ≤ t ≤ |E|$. If \emph{none} of these events occur, then all such $V'$ induce less than $(ℓ+1)|V'|/k$ hyperedges and therefore induce hypergraphs with average degree less than $ℓ+1$, so a vertex of degree at most $ℓ$ exists in each of them.
    
    We will show $\Pr[\bigcup_s \bigcup_t W_{s,t}] = \O(1/n)$ using a first moment argument. It is convenient to assume that hyperedges are $k$-tuples, possibly with repetition.
    First note that $F_n$ then contains three copies of the same hyperedge with probability at most $\binom{m}{3}(\frac{z+1}{n})^{-2k} = \O(n^{-2k+3}) = \O(n^{-1})$, so we restrict our attention to $F_n$ with at most two copies of the same hyperedge.
    Given $s$ and $t$ there are $\binom{n}{s}$ ways to choose $V'$. Since there are $s^k$ ways to form $k$-tuples from vertices of $V'$ and each hyperedge occurs at most twice, there are at most $\binom{2s^k}{t}$ multisets of hyperedges that $V'$ could induce. The probability that any given $k$-tuple occurs as a hyperedge is either zero if the $k$ vertices are too far apart or at most $1 - (1- (\frac{(z+1)}{n})^k)^{czn} ≤ \frac{(z+1)^kℓ}{n^{k-1}}$. Similarly, it occurs as a duplicate hyperedge with probability at most $(\frac{(z+1)^kℓ}{n^{k-1}})²$. Since the presence of hyperedges is negatively correlated, we may obtain an upper bound on the probability of the event that a set of hyperedges are all simultaneously present by taking the product of the events for the presence of the individual hyperedges. Thus, using constants $C, C', C'' ∈ ℝ^+$ (that may depend on $k,ℓ$ and $z$) where precise values do not matter, we get
    \begin{align*}
        &\phantom{=}\Pr[\bigcup_{s = 1}^{δn} \bigcup_{t = (ℓ+1)s/k}^{|E|} W_{s,t}]
        ≤ \sum_{s = 1}^{δn}\sum_{t = (ℓ+1)s/k}^{|E|}\Pr[W_{s,t}]\\
        &≤ \sum_{s = 1}^{δn}\sum_{t = (ℓ+1)s/k}^{|E|} \binom{n}{s} \binom{2s^k}{t} \bigg(\frac{(z+1)^kℓ}{n^{k-1}}\bigg)^t\\
        &≤ \sum_{s = 1}^{δn}\sum_{t = (ℓ+1)s/k}^{|E|} \bigg(\frac{\e n}{s}\bigg)^{s} \bigg(\frac{2\e (z+1)^kℓs^k}{tn^{k-1}}\bigg)^{t}\\
        &≤ \sum_{s = 1}^{δn}\sum_{t = (ℓ+1)s/k}^{|E|} \bigg(C\frac{n}{s}\bigg)^{s} \bigg(C'\frac{s^{k-1}}{n^{k-1}}\bigg)^{t}\\
        &≤ 2\sum_{s = 1}^{δn}\bigg(C\frac{n}{s}\bigg)^{s} \bigg(C'\frac{s^{k-1}}{n^{k-1}}\bigg)^{\lceil(ℓ+1)s/k\rceil}\\
        &= 2\sum_{s = 1}^{δn}\big(C''\tfrac{s}{n}\big)^{\lceil((k-1)(ℓ+1)-k)\frac{s}{k}\rceil}
        ≤ 2\sum_{s = 1}^{δn}\big(C''\tfrac{s}{n}\big)^{\lceil\frac{s}{k}\rceil}.
    \end{align*}
    To get rid of the summation over $t$, we assumed $(s/n)^{k-1} ≤ δ^{k-1} ≤ \frac{1}{2C'}$, in the last step we used $k ≥ 2$, $ℓ ≥ 1$ and $(k,ℓ) ≠ (2,1)$. Elementary arguments show that in the resulting bound, the contribution of summands for $s ∈ \{1,…,2k\}$ is of order $\O(\frac{1}{n})$, the contribution of the summands with $s ∈ \{2k+1,…,O(\log n)\}$ is of order $\O(\frac{\log n}{n²})$ (using $\frac{s}{n} ≤ \smash{\frac{\log n}{n}}$) and the contribution of the remaining terms with $s ≥ 3\log₂ n$ is of order $\O(2^{-\log₂ n}) = \O(\frac{1}{n})$  (using $C'' \frac{s}{n} ≤ C''δ ≤ \frac{1}{2}$).
    This gives $\Pr[\bigcup_{s,t} W_{s,t}] = \O(n^{-1})$, which proves the claim.
\end{proof}

We are now ready to prove the first half of \cref{thm:main}.
% \begin{theorem}
%     \label{thm:erosion-implies-peeling}
%     For all $k ≥ 2$, $ℓ ≥ 1$ with $k+ℓ ≥ 4$ we have $c_{k,ℓ}^* ≤ f_{k,ℓ}$.
% \end{theorem}

\begin{proof}[Proof of \cref{thm:main} (i)]
        Let $c < c_{k,ℓ}^*$ and $z ∈ ℝ^+$. We need to show that $F_n$ is $ℓ$-peelable whp.

        First, let $δ = δ(k,ℓ,z)$ be the constant from Lemma \ref{lem:no-tiny-core} and $R = R(δ/2)$ as well as $N$ the corresponding constants from \cref{lem:almost-peelable}.
        
        Assuming $n ≥ N$ we have $\q{R}(x) ≤ δ/2$ for all $x ∈ X$, meaning any vertex $v$ from $F_n$ is \emph{not} deleted within $R$ rounds of $\peel_{v,R}(F_n)$ with probability at most $δ/2$. Since $\peel(F_n)$ deletes in $R$ rounds at least the vertices that any $\peel_{v,R}(F_n)$ for $v ∈ V$ deletes in $R$ rounds, the expected number of vertices not deleted by $\peel(F_n)$ within $R$ rounds is at most $δn/2$.
        
        Now standard arguments using Azuma's inequality (see, e.g.~\cite[Thm.\ 13.7]{MU:Probability:2017}) suffice to conclude that whp at most $δn$ vertices are not deleted by $\peel(F_n)$ within $R$ rounds.
        % For instance, we can argue as follows. The expected size of the $r$-neighbourhood of any vertex of $F$ is at most $(kc)^r$. Tail bounds on Galton-Watson processes show that the probability that the neighbourhood of any vertex has size more than $(kc)^r … $ is at most $…$. We may therefore condition on all neighbourhoods being small. In that case, changing any incidence of a single hyperedge to a different vertex affects the number of vertices remaining after $r$ rounds of peeling by at most $…$. Now Azuma's inequality implies that the probability that the number of vertices remaining after $r$ rounds of peeling differs from its expectation by at least $δn/2$ is at most $…$. \todo{plug in the remaining parts of this argument.}
        
        By Lemma \ref{lem:no-tiny-core}, whp, neither the remaining $δn$ vertices, nor any subset induces a hypergraph of minimum degree at least $ℓ+1$. Therefore $\peel(F_n)$ deletes all vertices whp.
\end{proof}

%!TEX root=./ms.tex

\section{Non-Orientability of \headMath{F_n}{Fn} above \headMath{c_{k,ℓ}^*}{ckl*}}
\label{sec:upperbound}

To show that $F_n$ is not \emph{$ℓ$-peelable} whp for $c > c_{k,ℓ}^*$ we argue that $F_n$ is even not \emph{$ℓ$-orientable} whp.\footnote{Alternatively, one could try to base a proof on \cref{prop:convergence-of-q} (ii), possibly by going through similar motions as \cite[Lemma 4]{Molloy05:Cores-in-random-hypergraphs}. If successful, this might give a detailed characterisation of the \emph{$(ℓ+1)$-core} of $F_n$ – the largest subhypergraph of $F_n$ with minimum degree $ℓ+1$. Presumably, the $(ℓ+1)$-core contains roughly a $q^*(x)$-fraction of the vertices with position roughly at $x ∈ X$. We leave this aside. Our approach has the upside of establishing a connection between orientability thresholds and peelability thresholds.}
Our proof relies on local weak convergence theory, a subject we danced around in \cref{sec:peeling-operator}. There are three ingredients.

\subparagraph{Ingredient 1: Identical weak limits.}
For a finite graph $G$, let $G(∘)$ be the random rooted graph obtained by designating a root at random. For a rooted (possibly infinite) graph $T$, let $T(r)$ be the $r$-neighbourhood of the root.
\begin{Definition}[Random Weak Limit \cite{L:A_New_Approach:2012}\footnote{The name random weak limit comes from \cite{L:A_New_Approach:2012}. The notion is also known as Benjamini-Schramm limit \cite{BS:DistributionalLimits:2011}. Aldous and Steele \cite{AS:Objective_Method:2004} call it the \emph{standard construction}.}]
    Let $(G_n)_{n∈ℕ}$ be a sequence of (fixed) graphs and $T$ a random (possibly infinite) rooted graph. We say that $(G_n)_{n∈ℕ}$ has \emph{random weak limit} $T$ if $G_n(∘)(r)$ converges in distribution to $T(r)$ as $n → ∞$, for all $r ∈ ℕ$.
\end{Definition}
\def\Tv{T_{\mathrm{vert}}}
\def\Te{T_{\mathrm{edge}}}
For example, for $c ∈ ℝ^+$, $k ∈ ℕ$ and $n ∈ ℕ$, let $H_n = H_{n,cn}^k$ be the fully random $k$-uniform hypergraph. Let $G^H_n$ be the incidence graph of $H_n$. In particular, $G^H_n$ is bipartite with $cn$ vertices of degree $k$ that correspond to hyperedges in $H_n$ and $n$ vertices (of varying degrees) that correspond to vertices in $H_n$. Moreover, consider the random (possibly infinite) tree $\Tv$ generated as follows. The root vertex is on level zero. A vertex $v$ at an even level is given a random number $X_v \sim \Po(c)$ of children on the next level. A vertex at an odd level is given $k-1$ children on the next level. Let further $\Te$ be the random tree with a root connected to the roots of $k$ independently sampled copies of $\Tv$. Lastly, let $T$ be the random tree obtained by taking a copy of $\Tv$ with probability $\frac{1}{1+c}$ and a copy of $\Te$ with probability $\frac{c}{1+c}$.

The following claim is standard\footnote{I cannot find a crystal clear reference for this, but \cite{Leconte:Cuckoo:2013,L:A_New_Approach:2012} \emph{consider} it to be standard with reference to \cite{K:Poisson:2006}. Since the language differs significantly, I consider \cite{Leconte:Cuckoo:2013} itself to be a better reference, since an arguably more complicated case is treated in detail.}.
\begin{fact}
    \label{fact:fully-random-rwl}
    Almost surely, the sequence $(G^H_n)_{n∈ℕ}$ has random weak limit $T$.\footnote{
        It is easy to get confused here because we implicitly “cast” the sequence $(G^H_n)_{n∈ℕ}$ of random variables on graphs into a sequence of graphs. To reiterate: Having a certain random weak limit is a property of a \emph{sequence of graphs} (not of distributions). The claim is that when sampling a sequence of graphs by independently sampling each element $G^H_n$ of the sequence as explained above, then the resulting sequence of graphs will have the property almost surely, i.e.~with probability $1$.}
\end{fact}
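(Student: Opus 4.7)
The plan is to verify the definition of random weak limit pointwise---for each depth $r$ and each candidate finite rooted bipartite graph $H$---and then to lift the ``in expectation'' convergence to almost-sure convergence via a concentration argument over the $cn$ independent hyperedges that generate $H_{n,cn}^k$. Concretely, let
\[ \mu_n(H,r) \coloneqq \frac{1}{(1+c)n}\,\big|\{v \in V(G^H_n) : G^H_n(v)(r) = H\}\big|, \]
so that the fact reduces to $\mu_n(H,r) \to \Pr[T(r) = H]$ almost surely for every fixed pair $(H,r)$. Since only countably many such pairs arise, a single null event handles all of them simultaneously.

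For convergence of $\mathbb{E}[\mu_n(H,r)]$ I would condition on the type of a uniformly chosen root $\circ$: with probability $\frac{1}{1+c}$ it is a vertex-vertex (yielding a $\Tv$-subtree) and with probability $\frac{c}{1+c}$ a hyperedge-vertex (yielding a $\Te$-subtree). Then I would run a breadth-first search of the $r$-neighbourhood, revealing hyperedges one at a time. At each newly exposed vertex-vertex the number of not-yet-revealed incident hyperedges is binomial with parameters that converge to the Poisson law specified in the definition of $\Tv$; at each newly exposed hyperedge-vertex, the remaining $k-1$ incidences land on fresh, distinct vertices with probability $1 - O(1/n)$. Because $r$ is constant, the BFS exposes only $O(1)$ vertices and hyperedges, so any ``collision''---two BFS steps hitting the same vertex, or the same hyperedge being revealed twice---has probability $O(1/n)$. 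Conditional on no collision, the exploration has exactly the distribution of $T(r)$, so $\mathbb{E}[\mu_n(H,r)] \to \Pr[T(r) = H]$.

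For concentration I would exploit that $G^H_n$ is a function of the $cn$ i.i.d.\ hyperedges $e_1,\dots,e_{cn}$: resampling one $e_i$ can affect the $r$-neighbourhood only of vertices within BFS distance $r+1$ of an incidence of $e_i$, so it alters at most $k\Delta^{O(r)}$ of the indicator values $\mathbf{1}[G^H_n(v)(r) = H]$, where $\Delta$ is the maximum degree of $G^H_n$. A routine first-moment computation yields $\Delta = O(\log n / \log \log n)$ with probability at least $1 - n^{-2}$, so on this event the bounded-differences constant of $(1+c)n \cdot \mu_n(H,r)$ is $(\log n)^{O(r)}$. McDiarmid's inequality (preceded by a trivial truncation to absorb the complementary rare event) then gives $\Pr[|\mu_n(H,r) - \mathbb{E}[\mu_n(H,r)]| > n^{-1/4}] \le n^{-2}$, and Borel--Cantelli delivers the almost-sure claim for each fixed $(H,r)$, hence simultaneously for all of them.

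The main obstacle is the bookkeeping during the BFS: one must carefully argue that repeated incidences within a single hyperedge and collisions across BFS steps together contribute only $o(1)$ to each neighbourhood-pattern probability, and that conditioning on ``no collision'' does not distort the asymptotic Poisson law of each vertex-vertex's offspring count. All of this is standard sparse-random-graph technology; essentially the same computation, in a slightly more involved setting, is carried out in detail in \cite{Leconte:Cuckoo:2013}, and the general framework for local weak convergence is laid out in \cite{L:A_New_Approach:2012}.
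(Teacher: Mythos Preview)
Your sketch is correct and follows the standard route for local weak convergence of sparse random hypergraphs: identify the limiting Galton--Watson tree via a BFS exploration with Poisson approximation for the offspring counts, then upgrade convergence in expectation to almost-sure convergence by a bounded-differences argument over the i.i.d.\ hyperedges together with Borel--Cantelli. The bookkeeping you flag (collisions, repeated incidences, conditioning on bounded maximum degree before applying McDiarmid) is exactly what needs to be done and is handled as you say.

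The paper, however, does not prove this statement at all. It is labelled a \emph{Fact}, declared ``standard'', and delegated to the literature---specifically to \cite{Leconte:Cuckoo:2013,L:A_New_Approach:2012} (with \cite{K:Poisson:2006} in the background for Poissonisation). So there is nothing to compare on the level of argument: your proposal supplies precisely the kind of proof that the cited references contain, while the paper itself simply invokes them.
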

\def\TF{\widetilde{F}_n}
\def\GF{\widetilde{G}^{F}_n}
Now, let also $z ∈ ℝ^+$ and let $F_n = F(n,k,c,z)$ be the random hypergraph from \cref{def:hypergraphs}.
We define $\TF$ to be a “borderless” version of $F_n$ where the vertices $i$ and $i+\frac{nz}{z+1}$ for all $i ∈ [\frac{n}{z+1}]₀$ are merged, “glueing” the right-most $\frac{n}{z+1}$ vertices of $F_n$ on top of the left-most $\frac{n}{z+1}$ vertices of $F_n$. Moreover, let $\GF$ be the incidence graph of $\TF$.

Techniques from \cite{Leconte:Cuckoo:2013} suffice to prove that we get the same random weak limit.
\begin{fact}
    \label{lem:limits-coincide}
    Almost surely, the sequence $(\GF)_{n∈ℕ}$ has random weak limit $T$.\footnote{Note that the limit $T$ does not depend on $z$.}
\end{fact}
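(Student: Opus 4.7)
The plan is to mimic the proof sketch already given for \cref{lem:localConvergenceOfH}, but to leverage the extra symmetry that the gluing produces. After identifying $i$ with $i+\tfrac{nz}{z+1}$ for $i \in [\tfrac{n}{z+1}]_0$, the vertex set of $\TF$ sits on a discrete circle of circumference $z$, and hyperedge positions live on the same circle (of length $z$, since $Y = [\tfrac12,z+\tfrac12)$ is identified with $[0,z)$). Crucially, the joint distribution of $\TF$ is invariant under rotations of this circle by multiples of $\tfrac{z+1}{n}$. Consequently the law of $\GF(\circ)(r)$ does not depend on where the root is placed, so I may analyse a fixed root $v$ at, say, position $x = (z+1)/2$ and drop the $|Y_x|$ factor that appeared in the proof of \cref{lem:localConvergenceOfH}.

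Fixing such a $v$, the probability that a given hyperedge is incident to $v$ is $\tfrac{1}{z}\bigl(1-(1-\tfrac{z+1}{n})^k\bigr) = \tfrac{k(z+1)}{zn} + O(n^{-2})$: a factor $\tfrac{1}{z}$ that the hyperedge's position lies in the (full, length-$1$) arc around $v$, and a factor $k(z+1)/n$ that $v$ is among the $k$ chosen incidences. Multiplying by the edge count $m = cn\tfrac{z}{z+1}$ yields expected degree $ck$, and a standard Poisson approximation delivers $\deg(v) \to \Po(ck)$, matching the degree at a vertex-node of $T$. Given such a hyperedge $e$ at position $y$, its other $k-1$ incidences are independent uniform vertices in the arc $[y-\tfrac12,y+\tfrac12]$; since the tree $T$ forgets positions, these are just fresh vertices whose residual structure, by translation invariance, is again distributed like the root's.

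The proof of local weak convergence is then an inductive BFS-coupling between $\GF(\circ)(r)$ and $T(r)$, carried out exactly as sketched for \cref{lem:localConvergenceOfH}. At each frontier vertex $w$, conditional on the history explored so far (which uses only $O(1)$ vertices among the $\Theta(n)$ available), the residual degree of $w$ in $\GF$ is distributed, up to total variation error $O(n^{-1})$, as $\Po(ck)$; at each frontier hyperedge we get $k-1$ new incidences that are essentially independent. The three error sources---positional discretisation drift, formation of short cycles inside the $r$-neighborhood, and the mild negative correlation between the presence of distinct hyperedges---each contribute $o(1)$ to the coupling failure probability for constant $r$, exactly as in \cite{Leconte:Cuckoo:2013}.

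The main obstacle is, as always in such arguments, bookkeeping the error terms uniformly across the $O(1)$ BFS steps and verifying that the positional information (which is a genuine feature of $\TF$ but absent from $T$) becomes asymptotically irrelevant. This is where gluing pays off: on the circle every vertex sees a full length-$1$ arc of admissible hyperedge positions and every hyperedge sees a full length-$1$ arc of admissible vertex incidences, so no position-dependent correction survives in the limit. The limiting object is therefore the position-free tree $T$ of \cref{fact:fully-random-rwl}, independent of $z$, as claimed.
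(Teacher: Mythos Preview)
The paper does not actually prove this statement: it is labelled a \emph{Fact} and the text merely says ``Techniques from \cite{Leconte:Cuckoo:2013} suffice to prove that we get the same random weak limit.'' Your sketch therefore already goes further than the paper, and your key insight---that the gluing restores full rotational symmetry so that the position-dependent factor $|Y_x|$ disappears and the limiting tree is the position-free $T$---is exactly the right one.

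That said, two points deserve attention before the argument is complete.

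\emph{Root type.} The graph $\GF$ is the \emph{incidence graph} and hence bipartite: a uniformly random root lands on a hyperedge-node with probability $\tfrac{c}{1+c}$, in which case the limiting neighbourhood is $\Te$, not $\Tv$. Your sentence ``the law of $\GF(\circ)(r)$ does not depend on where the root is placed'' is only correct \emph{conditionally on the root type}; rotational symmetry makes all vertex-roots interchangeable and all edge-roots interchangeable, but not the two classes with each other. You handle only the vertex-root case. The edge-root case is easier (deterministic degree $k$, then $k$ independent $\Tv$-subtrees), but it must be stated to recover the full mixture $T = \tfrac{1}{1+c}\Tv + \tfrac{c}{1+c}\Te$.

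\emph{Almost sure versus in expectation.} Your coupling shows, for a fixed root $v$, that $\Pr[\GF(v)(r)=H]\to\Pr[T(r)=H]$, which by symmetry gives convergence of the \emph{expected} empirical neighbourhood statistics. The claim, however, is that the \emph{random} sequence $(\GF)_n$ almost surely has random weak limit $T$, i.e.\ that for each $r,H$ the empirical fraction $\tfrac{1}{|V(\GF)|}\sum_v \mathbb{1}[\GF(v)(r)=H]$ converges almost surely. This requires a concentration step (second moment or Azuma; changing one hyperedge affects only the $r$-neighbourhoods of $O(1)$ vertices), which is routine but should be mentioned.
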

\subparagraph{Ingredient 2: Lelarge's Theorem \cite{L:A_New_Approach:2012}.}
To clarify its role in our proof, we restate a remarkable theorem due to Lelarge \cite{L:A_New_Approach:2012} in weaker form.
\def\T{T^*}
\begin{theorem}[{Lelarge \cite[Theorem 4.1]{L:A_New_Approach:2012}}]
    \label{thm:lelarge}
    Let $(G_n = (A_n, B_n, E_n))_{n∈ℕ}$ be a sequence of bipartite graphs with $|E_n| = \O(|A_n|)$. Let further $M(G_n)$ be the maximum size of a set $E' ⊆ E_n$ with $\deg_{E'}(a) ≤ 1$ for $a ∈ A_n$ and $\deg_{E'}(b) ≤ ℓ$ for $b ∈ B_n$. If $(G_n)_{n∈ℕ}$ has random weak limit $\T$ and $\T$ satisfies certain natural properties\footnote{The limit $\T$ must be a \emph{bipartite unimodular Galton-Watson tree}, see \cite{L:A_New_Approach:2012} for an explanation. It is clear that $\T = T$ has the required properties.}, then $\smash{\lim\limits_{n→∞}} \frac{M(G_n)}{|A_n|}$ exists and depends only on $\T$.
\end{theorem}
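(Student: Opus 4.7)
The plan is to follow the objective method of Aldous and Steele: write $M(G_n)/|A_n|$ as the expectation of a functional of a uniformly chosen rooted graph, decompose that functional into a local piece plus a tail, and then combine the random weak convergence $(G_n)$ to $T^*$ with uniform control on the tail. Because $|E_n| = \mathcal{O}(|A_n|)$, deleting or adding a single vertex modifies $M(G_n)$ by at most $\mathcal{O}(1)$, so the vertex-exposure Doob martingale for $M(G_n)$ has bounded differences and Azuma's inequality gives $|M(G_n) - \E[M(G_n)]| = o(|A_n|)$ almost surely. It therefore suffices to prove convergence of $\E[M(G_n)]/|A_n|$.

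For the local representation, I would attach i.i.d.\ $\mathrm{Uniform}[0,1]$ labels to the vertices so that the maximum $b$-matching becomes unique almost surely, let $E^*(G)$ denote this canonical optimum, and set $s(G,a) = \mathbf{1}[a \text{ is saturated in } E^*(G)]$ for $a \in A$. Then $M(G)/|A| = \E_{o \sim A}[s(G,o)]$. For each $r \in \mathbb{N}$, define a truncated proxy $s_r(G,a)$ by running the same procedure restricted to the $r$-neighbourhood $B_r(a)$ in $G$. Since $s_r$ is a bounded measurable functional of the labelled rooted $r$-neighbourhood, the random weak convergence hypothesis delivers $\E_{o_n}[s_r(G_n,o_n)] \to \E[s_r(T^*,\mathrm{root})]$ for every fixed $r$.

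The main obstacle is to upgrade this to a uniform local-to-global approximation, i.e.\ to show that $\limsup_{n \to \infty} |\E_o[s(G_n,o)] - \E_o[s_r(G_n,o)]| \to 0$ as $r \to \infty$. Classical $b$-matching theory says that $s(G_n,a) \neq s_r(G_n,a)$ forces the existence of an alternating/augmenting path through $a$ that escapes $B_r(a)$; one must therefore show that such long paths are rare. In the unimodular Galton-Watson setting this is established by writing down the recursive distributional equations (the cavity equations) for the $b$-matching problem on the random tree $T^*$: a fixed-point measure on $[0,1]$ captures the asymptotic probability that an incident edge ``wants'' to be used, and contraction (or monotonicity) of the cavity operator around this fixed point yields the required geometric decay in $r$. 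This part relies essentially on the structural hypotheses on $T^*$ and is the technical heart of the argument.

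Combining the pieces, the interchange of limits
\[
\lim_{n \to \infty}\frac{M(G_n)}{|A_n|} = \lim_{r \to \infty}\lim_{n \to \infty}\E_o[s_r(G_n,o)] = \lim_{r \to \infty}\E[s_r(T^*,\mathrm{root})]
\]
is justified, and the right-hand side depends only on the distribution of $T^*$, proving the claim. I expect the concentration and the local weak-limit passage to be routine; the entire difficulty sits in the uniform approximation step, which is precisely where Lelarge's RDE machinery pays off.
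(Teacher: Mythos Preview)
The paper does not prove this theorem at all: it is quoted from \cite{L:A_New_Approach:2012} as a black box (``we restate a remarkable theorem due to Lelarge in weaker form'') and then applied to the two incidence-graph sequences $(G^H_n)$ and $(\GF)$. So there is no proof in the paper to compare your proposal against.

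As for your sketch of Lelarge's argument itself, two remarks. First, your opening concentration step via Azuma is misplaced: in the statement the $G_n$ are a \emph{deterministic} sequence of bipartite graphs, and $M(G_n)$ is a number, not a random variable. There is nothing to concentrate; the only randomness in the hypothesis is the uniform choice of root that defines the random weak limit. Second, your plan to obtain the uniform local-to-global approximation by ``contraction of the cavity operator around the fixed point'' is not how Lelarge proceeds, and in general one cannot expect such contraction or correlation decay. The actual argument exploits that the cavity recursion for matchings is monotone, so iterating from the two extremal boundary conditions produces a smallest and a largest fixed point; the crucial (and non-obvious) fact is that the matching-number functional evaluates to the \emph{same} value at both extremal fixed points, which is what allows the limit to exist even in the absence of uniqueness. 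Your outline correctly locates the difficulty but misidentifies the mechanism that resolves it.
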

A graph $G$ in the theorem should be interpreted as the incidence graph of a hypergraph $H$ with vertex set $B$ and hyperedge set $A$. Then $M(G)$ is the size of a largest set $A' ⊆ A$ such that the subhypergraph $(B,A')$ of $H$ is $ℓ$-orientable. In other words, $M(G)$ is the size of the largest \emph{partial $ℓ$-orientation} of $H$.
\subparagraph{Ingredient 3: Orientability-Gap above the threshold.}
Assume $c = c_{k,ℓ}^* + ε$ for $ε > 0$. By definition, it is not the case that $H_n$ is $ℓ$-orientable whp. More strongly however, it is known \cite{FKP:The_Multiple:2011,L:A_New_Approach:2012} that there exists a constant $δ = δ(ε) > 0$ such that the largest partial $ℓ$-orientation of $H_n$ has size $(1-δ)cn + o(n)$ whp. In the terms of \cref{thm:lelarge}, this means $\smash{\lim\limits_{n→∞}} {M(G^H_n)}/{|cn|} = 1-δ$ almost surely. We now put all three ingredients together.
\begin{proof}[Proof of \cref{thm:main} (ii)]
    Let $c = c_{k,ℓ}^* + ε$ and $δ = δ(ε)$ as above. We pick $z ≥ z^* ≔ \frac{2ℓ}{δc}$.
    
    Since $(G^H_n)_{n∈ℕ}$ and $(\GF)_{n∈ℕ}$ almost surely share the random weak limit $T$ by \cref{fact:fully-random-rwl,lem:limits-coincide}, we conclude from \cref{thm:lelarge} that the orientability gap carries over from $H_n$ to $\TF$, i.e.~${\lim\limits_{n→∞}} {M(\GF)}/m = 1-δ$ almost surely, where $m = \frac{czn}{z+1}$ is the number of hyperedges in $F_n$ and $\TF$.
    
    % . By choice of $c$, $H_n$ is not $ℓ$-orientable whp. More strongly even, via \cref{thm:lelarge} we can get that there exists $δ = δ(ε) > 0$ such that the largest \emph{partial $ℓ$-orientation}, i.e.~the largest subset of the hyperedges that can be oriented, has size $(1-δ)cℓn + o(n)$ whp, see  \cite{L:A_New_Approach:2012}. We pick $z ≥ z^* ≔ \frac{2}{δc}$.
    
    % By \cref{lem:limits-coincide}, any result we draw from \cref{thm:lelarge} for $(H_n)_{n∈ℕ}$, carries over to $(\TF)_{n∈ℕ}$,
    In particular, the size of the largest partial $ℓ$-orientation of $\TF$ is
    % also essentially a $(1-δ)$-fraction of the number of hyperedges, namely
    $(1-δ)m+o(n)$ whp. Switching from $\TF$ back to $F_n$ splits $\frac{n}{z+1}$ vertices and can increase the size of a largest partial $ℓ$-orientation by at most $\frac{ℓn}{z+1}$ to $(1-δ+\frac{ℓ}{c(z+1)})m+o(n) ≤ (1-\frac{δ}{2})m+o(n)$ whp. Thus $F_n$ is not $ℓ$-orientable whp.
\end{proof}
%!TEX root=./ms.tex

\section{Experiments}
\label{sec:experiments}

To compare our new construction to other families of peelable hypergraphs, we used them to construct $1$-bit retrieval data structures as explained in \cref{sec:HBDS}. The following peeling-based variations have been implemented.

\begin{description}
        •[BPZ \cite{BPZ:Practical:2013}] $H$ is a fully random $3$-uniform hypergraph with a hyperedge density below the $1$-peelability threshold $\cp{3}{1} ≈ 0.818$. Construction via peeling and \eval-operations are very fast, but there is a sizeable overhead of $23\%$.
        •[LMSS \cite{LMSS:Efficient_Erasure:2001}] The hyperedges are distributed such that $H$ is the $1$-peelable hypergraph from \cite{LMSS:Efficient_Erasure:2001} already mentioned in \cref{sec:comparison}.
        To our knowledge, these hypergraphs have not been considered in the context of retrieval.
        •[COUPLED (this work).] The hyperedges are distributed such that $H = F(n,k,c,z)$. Recall that the hyperedge density $\hat{c} = \frac{m}{n}$ is slightly smaller than the chosen parameter $c$.
\end{description}
\begin{table*}[t]
    \includegraphics[page=4]{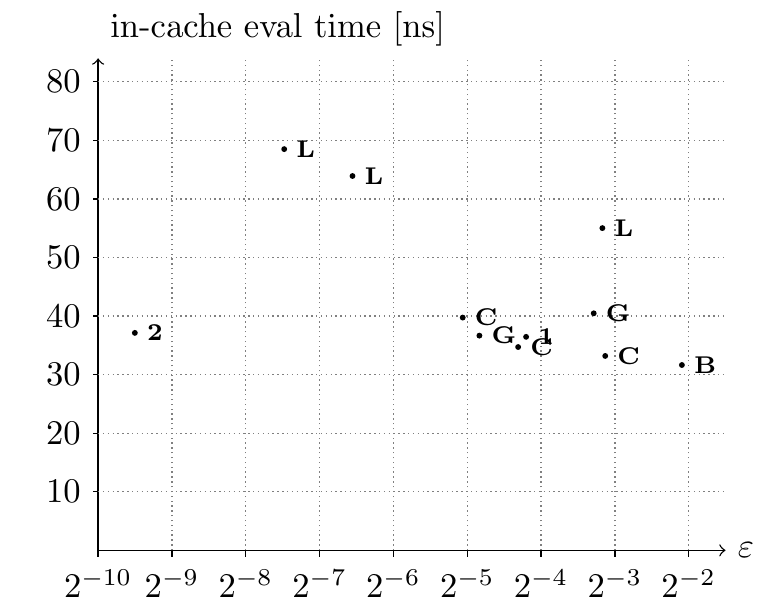}
    \caption{Overheads and average running times per key of various practical retrieval data structures.}
    \label{tab:results}
    \vspace{1em}
    \hrule
\end{table*}

\noindent Constructions of retrieval data structures not relying on peeling exist \cite{Vigna:Fast-Scalable-Construction-of-Functions:2016,DW:Retrieval-log-extra-bits:2019,DW:One-Block-per-Row:2019,Weaver:XORSAT-Filter,P:An_Optimal:2009} and we have implemented three of them. 
To avoid high construction times for solving non-peelable linear systems they use the standard trick of splitting the input into chunks of small expected size $C$ using an additional hash function.
\textsc{gov} \cite{Vigna:Fast-Scalable-Construction-of-Functions:2016} is similar to \textsc{bpz} in that it uses fully random $3$- or $4$-uniform hypergraphs, but \textsc{gov} allows for densities up to $\co{3}{1}$ or $\co{4}{1}$.
While being more complicated, at least “\textsc{2-block}” from \cite{DW:Retrieval-log-extra-bits:2019} undercuts the achievable overhead of the peeling-based approaches and at least “\textsc{1-block}” from \cite{DW:One-Block-per-Row:2019} offers better construction time due to being particularly cache-efficient. Results are reported in \cref{tab:results}. The used C++ code has been made available \cite{Walzer:RetrievalImplementation:2020} and a detailed discussion is found in the author's thesis \cite[Chap. 12]{W:Thesis}.

Experiments were run on a Microsoft Surface Pro 6 with an Intel Core i5-8250U Processor with a maximum single-core frequency of 3.40GHz. In all cases, the data set $S$ contains $m = 10^7$ random 64 bit integers.\footnote{When using the first $m = 10^7$ URLs from the \texttt{eu-2015-host} dataset gathered by \cite{BMSV:Crawls:2014} with ${≈}$80 bytes per key, we get similar results, except that running times are uniformly increased due to the higher cost for evaluating hash functions on these large keys.} The function $f \colon S → \{0,1\}$ is the parity of the integer\footnote{Note that the number and type of operation performed by \construct and \eval is completely independent of $f$. Thus $f$ does not affect our measurements.}. Hash functions are based on \texttt{xxhash} \cite{Collet:xxhash:2020} (we thereby depart from the full randomness assumption implicit in our theoretical analysis).
% If more than 128-bits are needed, techniques resembling double hashing are used to avoid further evaluations of the hash function.
Query times are averages obtained by performing \eval for all elements of the data set once. The reported numbers are averages over 10 executions. Reported overheads are $1-N/m$ where $N ≥ n$ is the total number of bits used by the data structure, including varying amounts of metadata.

The main takeaway is that our new construction can beat the main other peeling-based approach BPZ in terms of overhead while offering similar running times. It seems likely that these improvements carry over to other data structures relying on peeling as well (see \cref{sec:HBDS}), for instance the space requirement of a construction of minimum perfect hash functions also suggested in \cite{BPZ:Practical:2013} drops from $≈ 2.44$ bits per key to $≈2.18$ bits per key (see \cite[Theorem A3]{W:Thesis}). With \textsc{lmss} it is possible to achieve smaller overheads at the expense of higher running times. Note also that the largest hyperedge size $D+4$ of \textsc{lmss} is exponential in the average hyperedge size. Therefore, the worst-case running time of \eval is much larger than the reported average running time of \eval.

\paragraph{On Choosing $\bm{z}$.} Our theoretical considerations treat $z$ as a constant and offer no guidance in how $z = z(n)$ should be chosen in practice. But consider the following heuristic argument, suggesting that $z = Θ(n^{1/3})$ may maximise the achievable hyperedge density $\hat{c} = \frac mn = c \frac{z}{z+1}$.

For finite $z$ and $n$, two things keep us from achieving $\hat{c} = \co{k}{1}$. \emph{Firstly}, vertices with positions $x ∈ [0,1] ∪ [z,z+1]$ at the borders have on average half the expected degree compared to other vertices. This leads to an overhead of $ε₁ = c - \hat{c} ≈ 1/z$. \emph{Secondly}, we need to choose $c = \co{k}{1} - ε₂$ slightly smaller than $\smash{\co{k}{1}}$. To see why, imagine the peeling process working linearly through the coupling dimension $X$. What happens around $x ∈ X$ should mostly depend on the expected $Θ(n/z)$ hyperedges with positions close to $x$. The standard deviation of their number is $Θ(\sqrt{n/z})$. If the local density is correspondingly increased by $Ω(\sqrt{n/z}/(n/z)) = Ω(\sqrt{z/n})$, this should not lead to the local density exceeding $\smash{\co{k}{1}}$ (otherwise we might get stuck), which calls for $ε₂ = \co{k}{1} - c ≥ Ω(\sqrt{z/n})$. Balancing $ε₁$ and $ε₂$ yields our recommendation of $z = Θ(n^{1/3})$.

Supporting experiments are found in \cref{fig:choosing-z}. There we estimate, for $n ∈ \{10⁴,10⁵,10⁶\}$ and various $z$ the value $\hat{c}(n,z)$ for which $F(n,3,\hat{c}(n,z)\frac{z+1}{z},z)$ is $1$-peelable with probability exactly $1/2$. To do so, we construct $500$ hypergraphs with distribution $F(n,3,1,z)$. Let the hyperedge set be $E = \{e₁,…,e_m\}$. We then determine the values
\[ m^* = \max_{m' ∈ [m]}\{ H' = ([n], \{e₁,…,e_{m'}\}) \text{ is $1$-peelable} \}.\]
This can be done by a customised peeling process on $H$ that, whenever no vertex of degree $1$ exists, deletes the hyperedge with highest index. Then $m^*+1$ is the highest index of a hyperedge that was deleted using the special rule. The median of the values $\frac{m^*}{n}$ over the $500$ runs is our approximation of $\hat{c}(n,z)$. The $z$ values maximising $\hat{c}(10⁴,z)$, $\hat{c}(10⁵,z)$ and $\hat{c}(10⁶,z)$ differ by roughly the factor $10^{1/3} ≈ 2.15$ as predicted by the argument above.
The choice of $z = 120$ for $m = 10⁷$ used in \cref{tab:results} is extrapolated from these observations.

\begin{figure}
    \includegraphics[page=1]{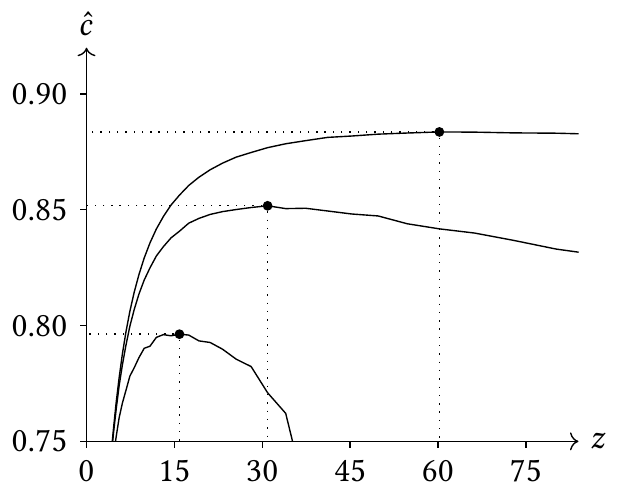}
    \caption[On choosing $z$ in finite spatially coupled hypergraphs.]{Approximate hyperedge densities $\hat{c}$ such that $F(n,3,\hat{c}\frac{z+1}{z},z)$ is $1$-peelable with probability $\frac 12$, for different choices of $z$. The three plot lines correspond, from bottom to top, to $n = 10⁴$, $n = 10⁵$, $n = 10⁶$.}
    \label{fig:choosing-z}
\end{figure}

Note that the results reveal that our construction does not downscale particularly well. For $n = 10^6$ the achieved densities are still more than $3\%$ below the optimum of $c_{k,ℓ}^*$ achieved for $n → ∞$.\footnote{One trick to mitigate this problem to some degree was suggested by Thomas Mueller Graf (personal communication) and is discussed in \cite[Chap. 12.2.5]{W:Thesis}.}
%!TEX root=./ms.tex

\section{Conclusion}
\label{sec:conclusion}

We have constructed families of $k$-uniform random hypergraphs with i.i.d.\ random hyperedges and an \emph{$ℓ$-peelability} threshold that approaches (for $z → ∞$) the \emph{$ℓ$-orientability} threshold $c_{k,ℓ}^*$ of fully random $k$-uniform hypergraphs.

We conjecture that this is best possible, i.e.\ no family of $k$-uniform random hypergraphs with i.i.d.\ random hyperedges has an $ℓ$-peelability threshold exceeding $c_{k,ℓ}^*$. In fact, even achieving $ℓ$-orientability beyond $c_{k,ℓ}^*$ seems unlikely.

We demonstrated the usefulness of our construction for hashing-based data structure using the example of retrieval data structures. The applicability of peelable hypergraphs is much wider, however, and whether using our construction yields significant improvements needs to be explored case by case. For instance, the stronger locality of the hyperedges might turn out to be advantageous in some settings while the higher number of rounds required by the (parallel) peeling process might be a problem in others.

We exploited the phenomenon of “threshold saturation via spatial coupling” that was discovered in coding theory and our proof borrows the powerful methods that were developed in that area.
We are very pleased to so effortlessly obtain improvements in hashing-based data structures and are curious to see whether this connection might be fruitful in other ways as well.

\section{Acknowlegements}

% \relatedversion
{This work is a comprehensive overhaul of \cite{DW:DensePeelable:2019}, presented at ESA 2019. Back then, we were unaware of the general phenomenon of threshold saturation via spatial coupling \cite{KRU:Wave-Like:2015,KRU:Why-Conv-LDPC-Works:2010,KRU:Coupled-Achieves-Capacity:2013} (see also \cite{GMU:Maxwell-Conjecture-Via-Spatial-Coupling:2012,KTT:Phenomenological-Threshold-Improvement:2012,SB:Thresholds-via-Lyapunov:2013,YJNP:Simple-Proof-Threshold-Saturation:2012}). The old construction was similar, but the analysis was more ad-hoc. The results were weaker, less general and less elegant.}

% \acknowledgements
{
    Shortly after \cite{DW:DensePeelable:2019} was published, Djamal Belazzougui recognised that we had stumbled upon a known phenomenon and provided very useful pointers.\\
    Konstantinos Panagiotou helped by locating additional relevant sources.\\
    Special thanks go to Martin Dietzfelbinger. He was a coauthor of \cite{DW:DensePeelable:2019} and closely followed the revision process. His valuable comments significantly improved the presentation of this work.
}

\bibliographystyle{plainurl}
\bibliography{bibliographie}

\end{document}